% BUILD_COMMAND make main.pdf -B
% \documentclass[a4paper,german]{article}
\documentclass[11pt]{article}
\usepackage{inputenc}
\usepackage{thm-restate}
\usepackage{lineno}

%Focs call: 11pt, 1-inch margins all around, on letter-size paper
\usepackage[letterpaper,top=1in,bottom=1in,left=1in,right=1in]{geometry}
\bibliographystyle{plain}

%Titlepage:
\title{Tight Bounds for Sorting Under Partial Information}
\author{Anonymous Author(s)}
\date{}

\usepackage{amsmath}
\usepackage{amssymb}
\usepackage{amsthm}
\usepackage{hyperref}
\usepackage[capitalize]{cleveref} 
\usepackage{fancyhdr}
\usepackage{booktabs}
\newcommand{\ra}[1]{\renewcommand{\arraystretch}{#1}}
\usepackage{graphicx}
\usepackage{tikz}
\usepackage{titlesec}
\usepackage[shortlabels]{enumitem}
\usepackage{algorithm}
\usepackage{algpseudocode}

% fancy author list
\usepackage{authblk}

% BUILD_COMMAND make main.pdf -B

% math commands
\newcommand{\pa}[1]{\left(#1\right)}
\newcommand{\bra}[1]{\left[#1\right]}
\newcommand{\set}[1]{\left\{#1\right\}}

\renewcommand{\P}[1]{\mathrm{Pr}\bra{#1}}

\newcommand{\cI}{\mathcal{I}}
\newcommand{\tO}{\widetilde{O}}

\newcommand{\ol}{\overline}

\newcommand{\scoop}{\coprod} % we can decide later whether we want \coprod or \bigsqcup
\newcommand{\tc}[1]{\operatorname{T c}\pa{#1}}

% math operators

\DeclareMathOperator{\Stab}{STAB}

\newcommand{\prev}{\mathrm{prev}}

\newcommand{\cardinal}{CFJJM}
\newcommand{\Cardinal}{(\cardinal)}
\newcommand{\van}{HKLS}
\newcommand{\Van}{(\van)}

% Side comments config
      %write \usepackage[disable]{todonotes} to disable and hide the todonotes.
  % \usepackage[disable]{todonotes}
  \usepackage{todonotes}
  \tikzset{notestyleraw/.append style={rectangle}}
  \setlength{\marginparwidth}{2.3cm}

     % something that is fine
   % something we should do
         % something is on fire
        % something to be improved by a more senior person
      % something that was done

% \paragraph with less spacing
% \titlespacing\subsubsection{0pt}{12pt plus 4pt minus 2pt}{0pt plus 2pt minus 2pt}

% case environment, basically left-aligned enumerate with custom labels.
  % {\renewcommand*\descriptionlabel[1]{\hspace\labelsep\normalfont\bfseries{##1}}%
  {%
  \begin{description}}
  {\end{description}}

\newtheorem{claim}{Claim}
\newtheorem{lemma}{Lemma}
\newtheorem{definition}{Definition}

\newtheorem{proposition}{Proposition}
\newtheorem{theorem}{Theorem}

\algblockdefx{Indent}{EndIndent}{}[1]{}
\algnotext[Indent]{EndIndent}

% tizk config
\usetikzlibrary{backgrounds}
\tikzset{every node/.style={draw,circle,fill=black,minimum size=4pt,inner sep=0pt}}
\tikzset{based/.style={ultra thick,blue}}
\newcommand{\ppdraw}[1]{%
    \def\points{{#1}}
    \foreach \pa [count=\c,remember=\pa as \pb] in \points {
        \pgfextra{\ifnum\c=0\relax\else \draw (\pa) -- (\pb);\fi}
    }
}

\begin{document}
\author[1]{Ivor van der Hoog}
\author[1]{Daniel Rutschmann}
\affil[1]{Technical University of Denmark, DTU}

\thispagestyle{empty}

\maketitle

\begin{abstract}
    Sorting is one of the fundamental algorithmic problems in theoretical computer science.
It has a natural generalization, introduced by Fredman in 1976, called \emph{sorting under partial information}.  The input consists of:

- a ground set $X$ of size $n$,

- a partial oracle $O_P$ (where partial oracle queries for any $(x_i, x_j)$ output whether $x_i \prec_P x_j$,

for some fixed partial order $P$), 

- a linear oracle $O_L$ (where linear oracle queries for any $(x_i, x_j)$ output whether $x_i <_L x_j$,

where the linear order $L$ extends $P$)

\noindent
The goal is to recover the linear order $L$ on $X$ using the fewest number of linear oracle queries.

In this problem, we measure algorithmic complexity through three metrics: the number of linear oracle queries to $O_L$, the number of partial oracle queries to $O_P$, and the time spent (the number of algorithmic instructions required to identify for which pairs $(x_i, x_j)$ a partial or linear oracle query is performed). 
Let $e(P)$ denote the number of linear extensions of $P$. Any algorithm requires worst-case $\log_2 e(P)$ linear oracle queries to recover the linear order on $X$. 

In 1984, Kahn and Saks presented the first algorithm that uses $\Theta(\log e(P))$ linear oracle queries (using $O(n^2)$ partial oracle queries and exponential time). Since then, both the general problem and restricted variants have been consistently studied. The state-of-the-art for the general problem is by Cardinal, Fiorini, Joret, Jungers and Munro who at STOC'10 manage to separate the linear and partial oracle queries into a preprocessing and query phase. They can preprocess $P$ using $O(n^2)$ partial oracle queries and $O(n^{2.5})$ time. Then, given $O_L$, they  uncover the linear order on $X$ in $\Theta(\log e(P))$ linear oracle queries and $O(n + \log e(P))$ time -- which is worst-case optimal in the number of linear oracle queries but not in the time spent. 

We present the first algorithm that uses a subquadratic number of partial oracle queries. For any constant $c \geq 1$, our algorithm can preprocess $O_P$ using $O(n^{1 + \frac{1}{c}})$ partial oracle queries and time. Given $O_L$, we uncover the linear order on $X$ using $\Theta(c \log e(P))$ linear oracle queries and time, which is worst-case optimal. We show a matching lower bound for the prepossessing also, as we show that there exist positive constants $(\alpha, \beta)$ where for any constant $c \geq 1$, any algorithm that uses at most $\alpha \cdot n^{1 + \frac{1}{c}}$ partial oracle queries must use worst-case at least $\beta \cdot c \log e(P)$ linear oracle queries. Thus, we solve the problem of sorting under partial information through an algorithm that is asymptotically tight across all three metrics.
\end{abstract}

\thispagestyle{empty}

\paragraph{Funding.}
Ivor van der Hoog received funding from the European Union's Horizon 2020 research and innovation programme under the Marie Sk\l{}odowska-Curie grant agreement No 899987.
Daniel Rutschmann received funding from Carlsberg Foundation
Young Researcher Fellowship CF21-0302 “Graph Algorithms with Geometric Applications”.

\newpage
\setcounter{page}{1}

% BUILD_COMMAND make main.pdf -B
\section{Introduction}

Sorting is one of the fundamental algorithmic problems in theoretical computer science.
It has a natural generalization, introduced in 1976~\cite{fredman_how_1976}, called \emph{sorting under partial information}.  Given are a ground set $X$ of size $n$ and a partial order $P$ on $X$. 
The $P$ is provided as a \emph{partial oracle} $O_P$ that for any ordered query pair $(x_i, x_j)$ can output in constant time whether $x_i \prec_P x_j$.  There exists a \emph{linear oracle} $O_L$ which specifies an unknown linear order $L$ on $X$ where $L$ extends $P$. 
$O_L$ answers queries of the form: given $x_i$ and $x_j$, is $x_i <_L x_j$? 
The goal is to recover the linear order with the fewest linear oracle queries.
Let $e(P)$ denote the number of linear extensions of $P$. 
A query strategy is a sequence of linear oracle queries that uncovers the linear order on $P$.
The \emph{log-extensions} of $P$ is equal to $\log e(P) = \log_2 e(P)$. 
Any query strategy, in the worst-case, needs to ask at least $\log e(P)$ linear oracle queries to recover the linear order on $P$.

\paragraph{Previous Work.}
Fredman introduced the problem in 1976 \cite[TCS'76]{fredman_how_1976}. He devised a query strategy that asks $\log e(P) + O(n)$ linear oracle queries.
This falls short of the desired $O(\log e(P))$ bound
when $\log e(P)$ is sublinear. In 1984, Kahn and Saks \cite[Order'84]{kahn_balancing_1984} showed
that there always exists a query which reduces the number of linear extensions by a
constant fraction; yielding a query strategy that uses $O(\log e(P))$ linear oracle queries.
 Kahn and Kim \cite[STOC'92]{kahn_entropy_1992} propose studying this problem entirely as an algorithmic problem, where 
 in addition to counting  the number linear oracle queries used, one also counts the number of algorithmic instructions required to find the next linear oracle query in the query strategy. 
 They note that the algorithm by Kahn and Saks \cite{kahn_balancing_1984} has, in addition to its $O(\log e(P))$ linear oracle queries, exponential running time. This is because in essence their algorithm counts the linear extensions of $P$, which is a \#P-complete~\cite{brightwell_counting_1991} problem. 

Kahn and Kim 1992 \cite{kahn_entropy_1992} propose the first polynomial time algorithm that performs $O(\log e(P))$ linear oracle queries.
Their key insight is to approximate $\log e(P)$ by using the graph entropy of the incomparability graph (the undirected graph on $X$ which includes the edge $\set{x_i, x_j}$ whenever they are incomparable in $P$). 
Their bottleneck is to compute this approximation in polynomial time via the ellipsoid algorithm.
Cardinal, Fiorini, Joret, Jungers and Munro \Cardinal~\cite[STOC'10]{cardinal_sorting_2010} further approximate
the graph entropy by computing a greedy maximum chain decomposition on the digraph induced by $P$. This way, they describe an $O(n^{2.5})$ algorithm
that asks $O(\log e(P))$ linear oracle queries. Moreover, their algorithm can be separated into a preprocessing
and a query phase: After spending $O(n^{2.5})$ time preprocessing $P$,
the entire interaction with the linear oracle can be
done in $O(\log e(P) + n)$ time using $O(\log e(P))$ linear oracle queries. See~\cref{tabl:upper_bounds}.

\begin{table*}[b] \ra{1.2}
	\centering
    \begin{tabular}{@{}ll@{}c@{}ll@{}cl@{}}
        \toprule
        \multicolumn{2}{@{}l@{}}{Preprocessing Phase} & \phantom{abcd} & \multicolumn{2}{@{}l@{}}{Query Phase} & \phantom{abcd} & Source \\
        \cmidrule{1-2} \cmidrule{4-5}
        Queries & Time && Queries & Time && \\
        \midrule
        $O(n^2)$ & $O(n^2)$ && $\log e(P) + O(n)$ & exponential && \cite[TCS'76]{fredman_how_1976}\\
        $O(n^2)$ & $O(n^2)$ && $O(\log e(P))$ & exponential && \cite[Order'84]{kahn_balancing_1984}\\
        $O(n^2)$ & $O(n^2)$ && $O(\log e(P))$ & $O(n^6)$ && \cite[STOC'92]{kahn_entropy_1992}\\
        $O(n^2)$ & $O(n^{2.5})$ && $O(\log e(P))$ & $O(\log e(P) + n)$ && \cite[STOC'10]{cardinal_sorting_2010}\\
        \midrule
        $O(n^{1+1/c})$ & $O(n^{1+1/c})$ && $O(c \log e(P))$ & $O(c \log e(P))$ && \cref{theo:upperbound}\\
        $\Omega(n^{1+1/c})$ & $\Omega(n^{1+1/c})$ && $\frac{1}{73} c \log e(P)$ & $\Omega(c \log e(P))$ && \cref{theo:lowerbound}\\
        \bottomrule
    \end{tabular}
\caption{\label{tabl:upper_bounds} Summary of previous work and our results.}
\end{table*}

\paragraph{Related work.}
Van der Hoog, Kostityna, L\"{o}ffler and Speckmann \Van~\cite{van_der_hoog_preprocessing_2019} study the problem in a restricted setting where the partial order $P$ is induced by a set of intervals (where $(x_i, x_j)$ are incomparable whenever their intervals $([a_i, b_i], [a_j, b_j])$ intersect and where otherwise $x_i \prec_P x_j$ whenever $b_i < a_j$). 
They present an algorithm that can preprocess the set of intervals in $O(n \log n)$ time such that they can compute the sorted order on $X$ in $O(\log e(P))$ linear oracle queries and algorithmic instructions.
CFJRJ~\cite{cardinal_efficient_2010} study a variation of this problem. 
Let $S = (s_1, \ldots, s_n)$ be a set of $n$ elements and $P$ be a partial order on $S$. 
Let $X = (x_1, \ldots, x_n)$ be a set of $n$ elements and $O_L$ be a linear order oracle on $X$. 
Find a permutation $\pi$ between $S$ and $X$ such that for all $(i, j)$: $s_i \prec_P s_j \rightarrow \pi(s_i) <_L \pi(s_j)$,  by asking questions of the form: ``is $x_i <_L x_j$?”. 
They show an algorithm using $\textsc{opt} + o(\textsc{opt}) + O(n)$ linear oracle queries. 
 Roychoudhury and Yadav~\cite{roychoudhury2022efficient} study the problem of sorting a set into a tree. 
Given is a partial order $P$ on $X$ where $P$ can be induced from a tree $T_P$ where all edges point towards the root. Given the partial oracle $O_P$, the goal is to construct the tree $T_P$ using only queries of the form: ``is $x_i \prec_P x_j$?”
They show a randomized algorithm that produces the tree $P$ in $O(d n)$ expected time, and a deterministic algorithm to output the sorted order of $X$ in $O(\omega n + n \log n)$ time. Here $d$ is the maximal degree and $\omega$ is the width of the tree. 

We note that recently and independently, HHIRT~\cite{haeupler2024fastsimplesortingusing} also present a subquadratic algorithm for sorting under partial information.
Their input is different from their paper, as they assume for input a directed $n$-vertex graph $G$ with $m$ edges which induces a partial order $P$. Their algorithm runs in $O(m + n + \log e(P))$ time and uses $O(\log e(P))$ linear oracle queries and is faster when $m$ is near linear. 
Their approach cannot separate inspecting the partial order (graph, in their case) and the linear order queries. 

\paragraph{Contribution. }
 We devise the first subquadratic time algorithm that performs
$O(\log e(P))$ linear oracle queries
for the problem of sorting under partial information. 
Similar to \Cardinal~\cite{cardinal_sorting_2010}, we achieve separate preprocessing and query phases.
For every  $c \ge 1$ we show how to preprocess $P$ in $O(n^{1+\frac{1}{c}})$ time, such that given the linear oracle $O_L$ we can compute the sorted order on $X$ in $O(c \log e(P))$ time.
We show that this result is asymptotically tight for constant $c$. Specifically, we show that there exists constants $(\alpha, \beta)$ where for any constant $c \geq 1$ any algorithm with at most $\alpha \cdot n^{1 + \frac{1}{c}}$ preprocessing time and queries requires at least  $\beta \cdot c \log e(P)$ linear oracle queries and time to recover the sorted order on $X$. 
While we use entropy-based arguments in our proofs,
the algorithm itself and the lower bound construction are purely combinatorial. We prove:

\begin{restatable}{theorem}{main} \label{theo:upperbound}
    For every constant $c \ge 1$, there is a deterministic algorithm that first preprocesses $P$ in $O(n^{1+1/c})$ time,
    and then asks $O(c \log e(P))$ linear oracle queries in $O(c \log e(P))$ time
    to recover the linear order on $X$, represented as a leaf-linked tree on $X$.
\end{restatable}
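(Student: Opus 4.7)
My plan is to adapt the preprocess-then-merge strategy of \Cardinal, trading a coarser chain decomposition in preprocessing for a factor of $c$ extra linear oracle queries in the query phase. Throughout, I would rely on the Kahn--Kim identity $\log e(P) = \Theta(n \cdot H(G_P))$ relating extension count to entropy of the incomparability graph, so that all accounting can be done in terms of entropy.

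In preprocessing, I would use the partial oracle to compute a chain decomposition of $P$ in $O(n^{1+1/c})$ time. The approach is to process elements one at a time: for each new element $x$, perform at most $O(n^{1/c})$ partial queries to find an existing chain that $x$ can be appended to (or else create a new chain). I would then assemble these chains into a balanced merge tree, which serves as the skeleton of the leaf-linked output tree. The key structural property I need is that the coarse decomposition leaves each element with only $O(n^{1/c})$ ``unresolved incomparabilities'' per level of the merge tree, so that the downstream merges are not too ambiguous.

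In the query phase, I walk the merge tree bottom-up, and at each internal node $v$ I merge the two sorted children into a single sorted list using linear oracle queries via binary insertion, simultaneously threading the leaf-linked pointers. Both the number of queries and the time per merge are proportional to the entropy $H_v$ of the set of valid interleavings at $v$ given what the partial order already implies. Because the merge tree is built in preprocessing and only the internal orderings are filled in during the query phase, no $\Omega(n)$ output pass is needed at query time.

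The main obstacle is the final accounting: showing $\sum_v H_v = O(c \log e(P))$. The natural bound $\sum_v H_v \le H(G_P) \cdot (\text{merge tree depth})$ is too weak because a balanced merge tree over $\Theta(n^{1-1/c})$ chains has depth $\Theta(\log n)$, not $\Theta(c)$. I would instead argue via a chain-aware entropy decomposition: each incomparable pair $\{x,y\}$ is ``charged'' only at the unique lowest merge node where $x$ and $y$ first meet, and within a single chain of length $O(n^{1/c})$ the total charged entropy is at most $O(c)$ times the local contribution to $H(G_P)$. This is precisely the place where the parameter $c$ enters, and where the lower bound in \cref{theo:lowerbound} presumably shows the tradeoff is tight; making this double-counting argument rigorous — and showing that our greedy coarse decomposition actually satisfies the required structural hypothesis — is the hard part.
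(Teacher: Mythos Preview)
Your proposal has two genuine gaps that the paper's proof fills with ideas you are missing.

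First, a chain decomposition alone cannot give a query phase running in $O(c\log e(P))$ time. Even the optimal Huffman merge of a \emph{greedy} chain decomposition (Theorem~\ref{theo:greedy} and Lemma~\ref{lemm:huffmanmerge}) costs $\Theta(\log e(P)+n)$, and your online ``append to some chain using $O(n^{1/c})$ probes'' construction is not greedy, records no inter-chain edges, and gives no control over chain count or length. When $P$ is already close to a total order (so $\log e(P)=o(n)$), merging two long chains from scratch costs $\Theta(n)$ linear queries regardless of how the merge tree is shaped; your ``no $\Omega(n)$ output pass'' remark addresses only pointer-writing, not comparisons. The paper sidesteps this by first extracting an \emph{approximate} maximum chain $C_0$ in $O(n\log n)$ time (Theorem~\ref{theo:majk}), so that everything outside $C_0$ has only $O(\log e(P))$ elements (Lemma~\ref{lemm:longentropy}), and then merging into $C_0$ by exponential search from precomputed fingers (Lemma~\ref{lemm:first_loop}); this is what kills the additive $n$.

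Second, the factor $c$ does not arise from an entropy double-counting over a merge tree as you sketch --- that accounting genuinely fails, as you yourself suspect, because depth $\Theta(\log n)$ cannot be traded down to $\Theta(c)$ by charging incomparable pairs once. In the paper the factor $c$ comes from a separate \emph{antichain} component: after removing $C_0$, one sets $w=\Theta(n^{1/(3c)})$ and extracts a maximal family $A$ of antichains of size $>w$ (Lemma~\ref{lemm:widthdecomp}), leaving a width-$\le w$ residue on which greedy decomposition is cheap (Lemma~\ref{lemm:greedylowwidth}). The crucial new lemma is the antichain bound $\log e(P)=\Omega(|A|\log w)$ (Lemmas~\ref{lemm:antientropy}--\ref{lemm:antiext}), which says each antichain element may be inserted by plain binary search at cost $O(\log n)=O(c\log w)$, totalling $O(c\log e(P))$. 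Without isolating large antichains and proving this bound, there is no handle on where the $c$ enters, and your ``chain-aware entropy decomposition'' remains a hope rather than an argument.
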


The constant $c$ in the theorem indicates a trade-off between
the precomputation phase and the recovery phase.
We complement our algorithm by an decision-tree lower bound which shows
that this trade-off is a fundamental property of sorting under partial information:

\begin{restatable}{theorem}{lowerbound_small} \label{theo:lowerbound}
    Let $c$ be a constant greater or equal to $3$.  There exists no deterministic algorithm that for all partial orders $P$ can preprocess $P$ in $\frac{1}{48} n^{1 + \frac{1}{c}}$ partial oracle queries, such that given the oracle it can recover the linear order on $X$ in at most $\frac{1}{73} c \log e(P)$ exact oracle queries.
\end{restatable}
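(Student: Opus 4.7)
I propose to prove Theorem~\ref{theo:lowerbound} by a decision-tree adversary argument: for every $c\ge 3$ I will exhibit a family $\mathcal{F}$ of partial orders on $n$ elements against which no deterministic algorithm can meet both of the claimed budgets simultaneously. A natural candidate is the family of \emph{ordered-block} partial orders, in which $[n]$ is partitioned into $n/k$ totally ordered blocks of size $k$ (elements inside a block are incomparable, distinct blocks are linearly ordered). The parameter $k=k(n,c)$ is to be tuned. This family has two key structural properties: every $P\in\mathcal{F}$ has the same log-extension $E=\log e(P)=(n/k)\log(k!)$, and each linear order on $X$ is the linear extension of \emph{exactly one} member of $\mathcal{F}$ (the unique $P$ whose blocks are the $L$-consecutive groups of size $k$). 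Consequently $|\mathcal{F}|\cdot 2^E=n!$, equivalently $\log|\mathcal{F}|=\log n!-E$.

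The pigeonhole step then goes as follows. Any deterministic algorithm using $T_q\le\tfrac{1}{48}n^{1+1/c}$ partial queries produces at most $2^{T_q}$ distinct preprocessing transcripts, so some transcript $\Pi^{\star}$ must be consistent with a subset $\mathcal{G}\subseteq\mathcal{F}$ of size at least $|\mathcal{F}|/2^{T_q}$. Because the sets $e(P)$ are pairwise disjoint across $\mathcal{F}$, the linear-query subtree rooted at $\Pi^{\star}$ must have at least $|\mathcal{G}|\cdot 2^E\ge n!/2^{T_q}$ leaves, one per possible correct output. Its worst-case depth is therefore at least $\log n!-T_q$, and so on some $P\in\mathcal{G}$ the algorithm is forced to use at least $\log n!-\tfrac{1}{48}n^{1+1/c}$ linear queries.

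It remains to choose $k$ so that $\tfrac{c}{73}\log e(P)<\log n!-\tfrac{1}{48}n^{1+1/c}$, which gives the required contradiction with the claimed $T_l$ bound. I expect the main obstacle to be precisely this parameter choice and the compatibility of the stated constants $\tfrac{1}{48}$ and $\tfrac{1}{73}$: the target $\tfrac{c}{73}\log e(P)$ only falls below the trivial $\log n!$-sort barrier once $c$ is past a threshold, and $k$ must be picked to keep $\tfrac{1}{48}n^{1+1/c}$ strictly inside the window $\log n!-\tfrac{c}{73}\log e(P)>0$. Once the right $k=k(n,c)$ is identified, the remaining entropy arithmetic (pigeonhole together with disjointness of the $e(P)$'s) is essentially mechanical, and the small-$c$ regime where $\tfrac{c}{73}<1$ is handled automatically by the trivial bound $T_l\ge\log e(P)$.
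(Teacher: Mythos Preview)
Your pigeonhole argument has a gap that cannot be repaired by tuning $k$ alone. You lower-bound the query depth by $\log n! - T_q$ with $T_q = \tfrac{1}{48}\,n^{1+1/c}$, and then hope to choose $k$ so that this exceeds $\tfrac{c}{73}\log e(P)$. But $T_q$ does not depend on $k$ at all, and for every fixed $c\ge 3$ one has $\tfrac{1}{48}\,n^{1+1/c}\gg n\log n\ge \log n!$ once $n$ is large, so your bound $\log n!-T_q$ is eventually \emph{negative} regardless of $k$. Put differently, the preprocessing budget is polynomially larger than $\log|\mathcal F|$, so a raw leaf count on the preprocessing tree gives nothing: the adversary could in principle route every $P\in\mathcal F$ to its own leaf. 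The inequality you say $k$ must satisfy, ``keep $\tfrac{1}{48}n^{1+1/c}$ strictly inside the window $\log n!-\tfrac{c}{73}\log e(P)>0$'', is unsatisfiable for large $n$ no matter what $k$ you pick.

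You can salvage a proof of Theorem~\ref{theo:lowerbound} as literally stated by also choosing $n$ adversarially (for instance $n\approx 2^{c}$ with $k=2$ makes the arithmetic go through, and one bad $n$ suffices to defeat an algorithm claimed to work for all $P$). This is, however, strictly weaker than what the paper actually proves. The paper builds a different family---$w=n^{1/c}$ disjoint chains with a fixed skeleton of $n/2$ essential elements and $n/2$ freely placed leftovers---and replaces the crude $2^{T_q}$ leaf count by an averaging argument: a query that touches a leftover succeeds only when it guesses the correct chain (probability $1/w$), so even after $\tfrac{1}{48}nw=\tfrac{1}{48}n^{1+1/c}$ queries a constant fraction of the leftovers remain completely hidden, each contributing a factor of roughly $n/w$ in residual ambiguity. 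That is how the paper extracts $\Omega(n\log n)$ bits of remaining uncertainty despite $T_q\gg n\log n$, and it is essential both for getting the bound at every $n\ge 10$ and for the stronger Monte Carlo statement of Theorem~\ref{theo:lowerbound_full}, neither of which a plain pigeonhole can reach.
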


In fact we show an even stronger lower bound, ruling out even the existence of more efficient Monte Carlo algorithms: 

\begin{restatable}{theorem}{lowerbound}\label{theo:lowerbound_full}
    For $c \geq 3$ there exists a family $\mathcal{F}$ of pairs $ (P, L)$ of partial orders and linear extensions such that
    there exists no algorithm that for more than a quarter of the $(P, L) \in \mathcal{F}$
    can preprocess $P$ in $\frac{1}{48} \cdot n^{1 + \frac{1}{c}}$ partial oracle queries and 
    recover $L$ in $\frac{1}{73} \cdot c \log e(P)$ exact and partial oracle queries.
\end{restatable}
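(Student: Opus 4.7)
The plan is to construct a family $\mathcal{F}$ of pairs $(P, L)$ and apply a decision-tree counting argument. I would set $k = \lceil n^{1-1/c} \rceil$ and fix a partition of $[n]$ into $g = n/k$ groups of size $k$; then define $\mathcal{F}$ to consist of all pairs $(P, L)$ where $P$ is the disjoint union of $g$ chains on these groups and $L$ is a linear extension of $P$. Because the partition is fixed, each linear order $L$ of $[n]$ corresponds to a unique $P \in \mathcal{F}$ (the chain order on each group is forced to match $L$'s restriction), so $|\mathcal{F}| = n!$ with every linear order appearing exactly once, and $\log e(P) = \log \binom{n}{k,\ldots,k} = \frac{n \log n}{c}(1+o(1))$ for every $P \in \mathcal{F}$.

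Next I would bound the number of reachable leaves in the algorithm's decision tree. On this family, a partial oracle query on a within-group pair has two possible answers (the two chain orderings) and on a cross-group pair is deterministically ``incomparable''. Hence a deterministic algorithm performing $q_p$ partial and $q_l$ linear queries on each root-to-leaf path has at most $2^{q_p + q_l}$ reachable leaves, and thus produces at most $2^{q_p + q_l}$ distinct outputs. Because each $L$ appears exactly once in $\mathcal{F}$, the algorithm is correct on at most $2^{q_p + q_l}$ pairs of $\mathcal{F}$, giving success rate at most $2^{q_p + q_l}/n!$, which is $\leq 1/4$ whenever $q_p + q_l + 2 \leq \log n!$.

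Finally, substituting the hypothesised budgets $q_p \leq \frac{1}{48} n^{1+1/c}$ and $q_l \leq \frac{c}{73} \log e(P) = \frac{n \log n}{73}$, the inequality becomes $\frac{1}{48} n^{1+1/c} + \frac{n \log n}{73} + 2 \leq \log n!$; using $\log n! = n \log n (1 - o(1))$, this reduces to $n^{1/c} \leq C \log n$ for a universal constant $C > 0$. For every $c \geq 3$ this has a nonempty range of valid $n$, from which I would choose the parameter $n$ for $\mathcal{F}$. The main obstacle is this calibration: the chain length $k = n^{1-1/c}$ controls $\log e(P)$ (and hence the recovery budget), while the specific constants $\frac{1}{48}$ and $\frac{1}{73}$ are chosen so that the counting inequality is satisfiable at the threshold $c = 3$; verifying the arithmetic precisely at this critical case is the tightest part of the proof.
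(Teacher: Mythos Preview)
Your counting argument is clean but has a structural gap that prevents it from proving the intended result. The inequality you need,
\[
\frac{1}{48}\,n^{1+1/c} + \frac{n\log n}{73} + 2 \;\le\; \log n!,
\]
forces $n^{1/c} = O(\log n)$, which for any fixed constant $c$ holds only for $n$ in a \emph{bounded} range. You acknowledge this and propose to pick such an $n$, but the paper's lower bound is meant to hold for every $n\ge 10$ (this is explicit in Proposition~\ref{prop:lowertech}); otherwise it would not match the upper bound of Theorem~\ref{theo:upperbound}, which is an algorithm parametrised by $n$. So as a proof of the trade-off the paper claims, your argument does not go through once $n$ is large.

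The reason your approach stalls is that you bound the number of decision-tree leaves by $2^{q_p+q_l}$, treating every preprocessing query as one full bit of information. With $q_p = \Theta(n^{1+1/c})$ this swamps $\log n!$. The paper's construction is engineered precisely to make most preprocessing queries \emph{uninformative}: half the elements (the ``Leftovers'') are scattered among $w=n^{1/c}$ chains, and a query involving a Leftover succeeds only if the other endpoint happens to lie on the same chain, which occurs with probability $1/w$. Thus an element touched by at most $w/4$ queries remains ``hidden'' with probability $\ge 3/4$, and a simple averaging shows $\Omega(n)$ elements stay hidden after $\frac{1}{48}n^{1+1/c}$ queries. Each hidden element still has $h-1\approx n^{1-1/c}$ possible positions within its chain, yielding $\ge (h-1)^{\Omega(n)} = 2^{\Omega(n\log n)}$ indistinguishable $(P_j,L_j)$ pairs that the query phase must separate. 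This is what converts a preprocessing budget of $n^{1+1/c}$ queries into only $O(n)$ bits of usable information about $P$, and it is the idea your argument is missing.
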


\paragraph{Key Ideas.} 
A partial order $P$ may be viewed as a directed acyclic graph (DAG) on $X$ where there exists an edge
from $x_i$ to $x_j$ whenever $x_i \prec_P x_j$. Conversely, a DAG $D$ induces a partial order via its transitive closure $\tc{D}$
which contains an edge from $x_i$ to $x_j$ whenever there is a path from $x_i$ to $x_j$ in $D$ and $x_i \ne x_j$. A chain in $P$ is any directed path in $P$ and an antichain of size $k$ is any set of $k$ pairwise incomparable elements in $P$.  

\Cardinal~\cite{cardinal_sorting_2010} construct a greedy chain decomposition $C_0, C_1, \ldots, C_w$ of $P$ (obtained by iteratively deleting maximum directed paths from $P$).
Denote $C = \{C_i \}_{i=1}^w$ and by $E_0$ the set of all edges  in $P$ between vertices in $C_0$ and $C \setminus C_0$.
They note that this creates a partial order $P' = \tc{E_0 \cup \coprod_{i=0}^{w} C_i}$ and they prove that $\log e(P') \in O(\log e(P))$ (and thus it suffices to only consdider $P'$ after preprocessing). 
In preprocessing, they spend $O(n^{2.5})$ time and $O(n^2)$ partial oracle queries to compute the greedy chain decomposition $C_0, C_1, \ldots, C_w$.
They additionally preprocess $E_0$ in $O(n^2)$ time and partial oracle queries. 

Given the linear oracle $O_L$, they pairwise merge all chains in $\{ C_i \}_{i=1}^w$ into a chain $C^*$ in $O(\log e(P))$ time and queries through a merge-sort like procedure.
The result is a partial order $P''$ where $P'' = \tc{E_0 \cup (C_0 \sqcup C^*)}$. 
They merge $C_0$ and $C^*$, using the edges in $E_0$, in $O(n+ \log e(P))$ time using an advanced merging algorithm
closely tied to the entropy of bipartite graphs.

Computing a maximum chain in a DAG requires $\Omega(n^2)$ partial oracle queries (Appendix~\ref{app:chains}).
% Moreover, the DAG they construct may have $O(n^2)$ edges.  % No, it has $O(n)$ edges. 
To avoid spending at least quadratic time, we deviate from the approach of \Cardinal~by constructing a considerably sparser graph, and by computing only \emph{approximately maximal} chains:

\begin{restatable}{theorem}{majk} \label{theo:majk}
    There is an algorithm that, given a partial oracle $O_P$ where $P$ contains an (unknown)
    maximum chain of length $n - k$, computes in $O(n \log n)$ queries and time a chain of length $n - 2 k$.
\end{restatable}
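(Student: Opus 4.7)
The plan is to maintain a single chain $C$ and insert the elements of $X$ one at a time, processing them in a uniformly random order, using binary search. The enabling structural observation is that if $C = (c_1 \prec_P \cdots \prec_P c_m)$ is a chain then, by transitivity of $\prec_P$, for any $x \in X \setminus C$ the indices $\{j : c_j \prec_P x\}$ form a prefix $[1, a]$ and $\{j : x \prec_P c_j\}$ form a suffix $[b, m]$, so the indices in $(a, b)$ correspond exactly to the elements of $C$ that are incomparable with $x$. Two binary searches therefore locate $a$ and $b$ in $O(\log m)$ partial oracle queries per element, giving $O(n \log n)$ queries in total.

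The update rule I plan to use is simple: if $b = a + 1$ (the incomparable middle is empty) insert $x$ between $c_a$ and $c_b$; otherwise discard $x$. The resulting $C$ is a chain by construction, so the output is a valid chain after all $n$ elements have been processed.

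For the chain length bound I would argue $\mathbb{E}[|C|] \geq n - 2k$ and then derandomize. The crucial observation is that any two elements of the unknown maximum chain $C^*$ are comparable in $P$, so whenever a $C^*$ element $c$ is being processed, the $C^*$ elements currently in $C$ lie entirely in the prefix $[1, a]$ or suffix $[b, m]$ and never in the middle. Consequently a $C^*$ element is dropped only when $C$ already contains an \emph{extra} $e \in X \setminus C^*$ with $e \| c$. Writing $J_e := \{c \in C^* : c \| e\}$, I charge every such drop to a pair $(e, c)$ with $c \in J_e$; a pairing argument over the random order shows that the joint event ``$e$ is inserted into $C$ and $\pi(e) < \pi(c)$'' has probability at most $1/(|J_e| + 1)$, because if $e$ is inserted then no $c' \in J_e$ is in $C$ at time $\pi(e)$, which after telescoping over the random order forces $\pi(e)$ to be minimal among $\{\pi(e)\} \cup \{\pi(c') : c' \in J_e\}$. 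Summing over the $|J_e|$ choices of $c$ and then over all $k$ extras $e$ gives expected $C^*$-drops at most $\sum_e |J_e|/(|J_e|+1) \leq k$; together with the at most $k$ extras outside of $C^*$ themselves, this yields $|C| \geq (n - k) - k = n - 2k$.

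The main obstacle will be making this charging rigorous, since the events ``extra $e$ enters $C$'' and ``$C^*$ element $c$ is blocked by $e$'' share the random order and an extra may itself fail to be inserted because an earlier element already blocks it. The cleanest way to handle this is to observe that the random permutation is exchangeable on $\{e\} \cup J_e$ and to derive the $1/(|J_e|+1)$ bound by induction on the portion of the order before $\pi(e)$: any $c' \in J_e$ appearing before $e$ must itself have been dropped, and unwinding this forces $e$ to precede every not-yet-dropped element of $J_e$. A standard probabilistic-method derandomization, or a constant number of independent repetitions, then upgrades the expected guarantee to the deterministic $O(n \log n)$-query bound stated in the theorem.
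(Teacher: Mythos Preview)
Your probabilistic argument has a genuine gap: the key inequality $\Pr[\,e\text{ inserted and }\pi(e)<\pi(c)\,]\le 1/(|J_e|+1)$ is false. Take the poset on four elements consisting of two incomparable two-element chains $c_1\prec_P c_2$ and $e_1\prec_P e_2$ (so $e_i\| c_j$ for all $i,j$; here $n=4$, $k=2$, $|J_{e_1}|=2$). Your bound predicts probability at most $1/3$, but $e_1$ is inserted exactly when the first processed element is an extra (12 of the 24 orders), and among those $\pi(e_1)<\pi(c_1)$ in $9$ orders, giving probability $3/8$. The fix you sketch (``$e$ precedes every not-yet-dropped element of $J_e$'') does not recover the bound, because elements of $J_e$ that were dropped \emph{before} $e$---blocked by other extras---place no constraint on $\pi(e)$ at all.

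More fundamentally, the charging scheme itself is too loose to ever yield $k$. With $k$ extras forming a chain, each incomparable with all $m=k$ elements of $C^*$, one computes $\mathbb{E}\bigl[\sum_e X_e\bigr]=k(k+1)/4$, which exceeds $k$ once $k\ge 4$. Thus no per-pair estimate can make this particular union bound work; the inequality $\#\{\text{dropped }C^*\text{-elements}\}\le \sum_e X_e$ is simply not tight enough. (In that same example the true expected number of $C^*$-drops is $k/2$, so the target $\mathbb{E}[|C|]\ge n-2k$ may well hold, but your argument does not establish it.) The derandomisation is also not addressed: ``a constant number of independent repetitions'' gives only a Las Vegas algorithm, and conditional expectations would require computing quantities you have not shown how to evaluate; the theorem is used as a subroutine of a deterministic result.

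For comparison, the paper's proof is a two-line deterministic argument: run merge sort, and whenever the merge step meets an incomparable pair, delete \emph{both} elements and continue. Any incomparable pair contains at most one element of the unknown maximum chain $C^*$, so at most $k$ pairs are ever deleted and a chain of length $\ge n-2k$ remains. The crucial difference from your scheme is that the blocking element is removed together with the blocked one, so a single extra can cost at most one $C^*$-element---precisely the bound your charging attempts, unsuccessfully, to recover in expectation.
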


Using the above theorem as a backbone, we generalize the construction of an alternative partial order $P'$ by computing a decomposition of $w$ chains of $P$ and $m$ antichains of $P$ where antichains have size at least $w+1$. 
We show that large antichains are also useful for sorting $X$ as we show:

\begin{restatable}{lemma}{antiext}\label{lemm:antiext}
     Let a poset $(P, X)$ contain $m$ disjoint antichains of size at least $w$, then
    $$\log e(P) \in \Omega(m \cdot w \log w).$$
\end{restatable}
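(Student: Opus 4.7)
I will restrict to the induced sub-poset $Q := P|_{A_1\cup\cdots\cup A_m}$ on the union of the antichains. Any linear extension of $Q$ can be completed greedily (by topologically sorting the remaining $|X|-mw$ elements) to a linear extension of $P$, so $e(P)\ge e(Q)$ and it suffices to show $\log e(Q)\in\Omega(mw\log w)$.

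The main step is the following \emph{peeling inequality}: for any poset $R$ and any antichain $A\subseteq R$ of size $w$,
\[
    e(R)\;\ge\;w!\,\cdot\,e\bigl(R|_{R\setminus A}\bigr).
\]
Applied inductively to $Q$ (peeling off $A_m, A_{m-1}, \ldots, A_1$ in turn, each time using the fact that the remaining $A_i$'s are still antichains in the induced restriction), this yields $e(Q)\ge (w!)^m$, from which $\log e(Q)\ge m\log(w!) \in \Omega(mw\log w)$ follows by Stirling (with a trivial direct verification for $w$ below any desired constant).

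\textbf{Main obstacle.}
The peeling inequality cannot be established by a pointwise injection $\mathrm{ext}(R|_{R\setminus A})\times S_A\hookrightarrow\mathrm{ext}(R)$: given $L'\in\mathrm{ext}(R|_{R\setminus A})$ and $\tau\in S_A$, the combined relation $R\cup L'\cup\tau$ can contain a directed cycle even though $A$ is an antichain in $R$. Concretely, if there exist $a,a'\in A$ and $v,v'\in R\setminus A$ with $a\prec_R v$, $v'\prec_R a'$, and $v<_{L'} v'$, then $\tau$ ordering $a'$ before $a$ creates the cycle $a\to v\to v'\to a'\to a$. I plan to prove the peeling inequality instead by a double-counting argument. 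For each $L'\in\mathrm{ext}(R|_{R\setminus A})$ and each $a\in A$, the set of valid insertion slots of $a$ in $L'$ is an interval $I(a)\subseteq L'$ determined purely by $a$'s up- and down-sets in $R\setminus A$; and since $A$ is an antichain in $R$, the elements of $A$ impose no constraints on each other, so the number of valid insertions of all of $A$ into $L'$ is a purely combinatorial quantity depending only on the intervals $\{I(a)\}_{a\in A}$. The sum of these counts over all $L'$ equals $e(R)$, and the main technical work is to show---via a symmetrization that exchanges the roles of $L'$ and $\tau$---that this total is always at least $w!\cdot e(R|_{R\setminus A})$, with equality attained exactly in the ``stacked'' configuration where $A$ lies entirely above (or below) $R\setminus A$.
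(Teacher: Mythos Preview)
Your route is genuinely different from the paper's. The paper proves the lemma in two lines by invoking graph entropy: one shows $n\,H(\ol P)\ge mw\log w$ by observing that the disjoint union of the $m$ cliques $A_1,\dots,A_m$ is a spanning subgraph of $\ol P$ and computing its entropy via convexity, and then applies the Kahn--Kim bound $\log e(P)\ge \tfrac12\, n\,H(\ol P)$. No peeling, no counting of insertions.

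Your reduction $e(P)\ge e(Q)$ is correct (restriction of linear extensions to an induced subposet is surjective), though it is not actually needed: the entropy argument works for $P$ directly.

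The real gap is the peeling inequality $e(R)\ge w!\cdot e(R\!\setminus\!A)$. You correctly identify that the naive injection $\mathrm{ext}(R\!\setminus\!A)\times S_A\hookrightarrow\mathrm{ext}(R)$ fails, and you propose to rescue it by a ``symmetrization that exchanges the roles of $L'$ and $\tau$''. But this is where the proposal becomes a plan rather than a proof. The most natural reading of your symmetrization---show that $e(R\cup\tau)\ge e(R\!\setminus\!A)$ for each fixed $\tau\in S_A$ and sum over $\tau$---is false, and it is killed by \emph{your own} obstacle example: with $R=\{a_1<v,\;v'<a_2\}$ on four elements and $\tau=(a_2<a_1)$, the poset $R\cup\tau$ is the chain $v'<a_2<a_1<v$, so $e(R\cup\tau)=1<2=e(R\!\setminus\!A)$. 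Thus neither $N(L')\ge w!$ pointwise nor $M(\tau)\ge e(R\!\setminus\!A)$ pointwise holds; only the averaged statement could be true, and you have not supplied an argument for it. Saying that $N(L')$ ``is a purely combinatorial quantity depending only on the intervals $\{I(a)\}$'' is a description, not a bound; and since the intervals $I(a)$ can be pairwise disjoint (again by your own example), there is no Helly-type shortcut.

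I did not find a counterexample to the peeling inequality, and it may well be true, but proving it looks like it needs a genuine idea (an averaging or correlation argument over $\mathrm{ext}(R\!\setminus\!A)$, not just a reinterpretation of the sum). Either supply that argument, cite a reference where this inequality is established, or switch to the entropy proof, which gives the required $\Omega(mw\log w)$ with the constant $\tfrac12$ in a few lines.
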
 

\noindent
In our preprocessing algorithm, we define a decomposition which we compute in $\widetilde{O}(n w^2)$ time:

\begin{definition}
\label{def:decomp}
    For any poset $(P, X)$ and any set $V \subseteq X$ we denote by $P - V$ the graph obtained by removing all vertices corresponding to elements in $V$ from $P$. For any integer $w$, we define a partition $C_0, A, C$ of $X$ where:
    \begin{itemize}[noitemsep, nolistsep]
        \item $C_0$ is an (approximately) maximum chain of $P$,
        \item $A = \{ A_i \}_{i=1}^m$ is a maximal set of antichains of size at least $w + 1$ in $P - C_0$.
        \item $C = \{ C_i \}_{i=1}^l$ is a greedy chain decomposition of $Y = P - ( C_0 \cup A )$. 
    \end{itemize}
    Finally, we denote by $E_0$ the set of edges between $C_0$ and $C$ in $P$ and $D' = E_0 \cup \scoop\limits_{i=0}^l C_i \sqcup \scoop\limits_{i=1}^m A_i$.
\end{definition}

We prove that: (1)  $D'$ has linear size, and (2) $P' = \tc{D'}$ satisfies $\log e(P') \in O(\log e(P))$.
To compute the decomposition of Definition~\ref{def:decomp}, we use the low-width poset sorting algorithm of
Daskalakis, Karp, Mossel, Riesenfeld and Verbin (DKMRV)~\cite{daskalakis_sorting_2009} as a subroutine to prove:

\begin{restatable}{lemma}{widthdecomp}
    \label{lemm:widthdecomp}
    There is an algorithm that with $\tO(n \, w^2)$ time and queries partitions a poset on $n$ elements into a maximal set $A$ of antichains of size $w+1$, and a set $Y$ of at most $w$ chains. 
\end{restatable}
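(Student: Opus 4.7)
The plan is to reduce the computation of the decomposition to iterated invocations of DKMRV's sorting algorithm~\cite{daskalakis_sorting_2009} for posets of known width. Recall that DKMRV sorts a poset of width at most $w$ in $\widetilde{O}(n w)$ partial oracle queries, producing a chain decomposition of size $w$.

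I would first augment DKMRV so that when run on a poset with width parameter $w$, it either (a) succeeds and outputs a chain decomposition into at most $w$ chains, or (b) detects that the width exceeds $w$ and returns a certifying antichain of size $w+1$. This detection fits naturally into the chain-partition style of DKMRV: at the moment when the algorithm would be forced to open a $(w{+}1)$-th chain in order to place some element $x$, there is, for each of the $w$ existing chains, a representative at which that chain rejected $x$. These representatives, together with $x$, are pairwise incomparable up to $O(w^2)$ additional queries that either confirm they form an antichain of size $w+1$ or reveal a comparison that allows the chain partition to be repaired without growing.

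The outer algorithm repeatedly calls this augmented subroutine. As long as the subroutine returns a certificate antichain $A^\ast$, we append $A^\ast$ to $A$, remove $A^\ast$ from the current poset, and iterate. When the subroutine eventually returns a chain decomposition of size at most $w$, we set $Y$ to this decomposition and output $(A, Y)$. Maximality of $A$ is immediate, since we stop exactly when the residual poset contains no antichain of size $w+1$.

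The main obstacle is the query complexity: a naive restart of DKMRV after each extraction costs $\widetilde{O}(n w)$ per iteration and up to $\Theta(n/w)$ iterations, summing to $\widetilde{O}(n^2)$, which is too weak. To reach the target $\widetilde{O}(n w^2)$ I would carry over DKMRV's internal chain state across iterations. The extracted antichain removes at most one element per chain (plus the triggering element), so the surviving partition is still a valid chain decomposition of the residual poset and only needs local patching around the deleted tops. This reduces per-iteration overhead to $\widetilde{O}(w^2)$ queries, so the total cost is $\widetilde{O}(n w)$ from one run of the augmented DKMRV plus $\widetilde{O}(n w)$ amortized over all patching steps, well within the $\widetilde{O}(n w^2)$ budget.
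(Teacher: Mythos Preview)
Your proposal has a genuine gap in the ``repair'' step. When element $x$ is rejected by all $w$ chains, each chain $C_i$ yields a witness $y_i$ with $y_i \parallel x$, but there is no reason these witnesses should be pairwise incomparable. You claim that if some $y_i \prec_P y_j$ is discovered, this ``allows the chain partition to be repaired without growing,'' but you do not say how, and in fact it is not clear that it can be done locally. Knowing $y_i \prec_P y_j$ tells you nothing about how $x$ relates to the rest of $C_i$ or $C_j$ beyond the single incomparabilities $y_i \parallel x$ and $y_j \parallel x$; in particular it does not let you thread $x$ into either chain, nor does it let you swap segments between $C_i$ and $C_j$ in a way that absorbs $x$. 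Without this step, a failed insertion may genuinely require opening a $(w{+}1)$-th chain even when the current width is still $\le w$, and then you are back to needing a peeling-style merge, whose cost you have not accounted for. Each such merge is $\Theta(nw)$, and there can be $\Theta(n)$ failed insertions that do not yield an antichain, so the naive bound is $\Theta(n^2 w)$, not $\widetilde{O}(n w)$.

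The paper avoids this difficulty by working with chain \emph{maxima} rather than arbitrary rejection witnesses. Given $w{+}1$ chains, it repeatedly inspects the $w{+}1$ current maxima: if they are pairwise incomparable they form the desired antichain; if instead $\max(C_i) \prec_P \max(C_j)$, then every element of $C_i$ lies below $\max(C_j)$, so $\max(C_j)$ can never participate in an antichain of size $w{+}1$ and may be safely discarded from consideration. This is exactly the structural property your rejection witnesses lack. The resulting \textsc{AntichainExtraction} runs in $O(nw)$ time, and the paper wraps it in a merge-sort recursion (halve, recurse, then alternate extraction and peeling to go from $2w$ chains back to $w$) to get the $\widetilde{O}(n w^2)$ bound. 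If you want to salvage the incremental route, you would need either a provably correct local repair or an amortization argument bounding the total peeling work across all failed insertions; neither is supplied.
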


Given our preprocessed DAG $D'$, in the query phase, we merge all chains in $C = \{ C_i \}_{i=1}^l$ into one chain $C^*$ with a procedure similar to merge-sort. 
We then merge $C_0$ and $C^*$, explicitly avoiding the advanced merging algorithm by~\cite{cardinal_sorting_2010}.
Instead, we use a neat insertion sort strategy that is based on exponential search. Not only is this procedure considerably simpler, it requires no further preprocessing and speeds up the running time of this step from
$O(\log e(P) + n)$ to $O(\log e(P))$, which is worst-case optimal even with infinite preprocessing. 
Finally, we show that we can insert all points in $A = \{ A_i \}_{i=1}^m$ into $C_0 \cup C^*$ in $O(c \cdot \log e(P))$ time using standard binary search. This last step is the bottleneck of our query phase. In Section~\ref{sec:lowerbound} we give a matching lower bound, proving that our algorithm is asymptotically tight with respect to the number of partial order queries, linear order queries and time spent.

% BUILD_COMMAND make main.pdf -B

\section{Preliminaries}
A (strict) \emph{partial order} $P$ on a ground set $X$ is a binary relation on $X$ that is
irreflexive, asymetric and transitive. In other words, the digraph on $X$ with an
edge $(x_i, x_j)$ for every $x_i \prec_P y_i$ is a transitively closed DAG.
$(P, X)$ is then called a \emph{poset}.
When clear from context, we omit $X$ and use $P$ to denote the poset $(P, X)$.
A \emph{linear order} (also called \emph{total order})
is a partial order with \emph{no incomparable} pairs of elements.
The DAG of a linear order contains a Hamilonian path.
A \emph{linear extension} $L$ of a partial order $P$ is a linear order $<_L$ such that
$x_i <_L x_j$ whenever $x_i \prec_P x_j$. A linear extension corresponds to a topological ordering of the DAG $P$. We assume that upper bound algorithms run on a RAM, for a pointer machine algorithm see Appendix~\ref{app:pointer}. 

\paragraph{Input and output.}
We split our algorithm into a preprocessing phase and query phase.
In the preprocessing phase, our input is some \emph{partial oracle} $O_P$ that for any ordered query pair
$(i, j)$ with distinct $i$ and $j$ returns a boolean indicating whether $x_i \prec_P x_j$.
If $x_i \not \prec_P x_j$ and $x_j \not \prec_P x_i$ then the elements are incomparable in $P$. 
In the query phase, our input is a \emph{linear oracle} $O_L$ which specifies a linear order $L$ that extends $P$.
For any query pair $(i, j)$ a linear oracle query outputs whether $x_i <_L x_j$. 
We measure the algorithmic complexity in the number of linear oracle and partial oracle queries and the \emph{time} spent  (the number of algorithmic instructions, which includes queries). 

For output, we recall three previously studied problem variants: 
Kahn and Saks~\cite{kahn_balancing_1984} and Kahn and Kim~\cite{kahn_entropy_1992} output a \emph{complementing set}: a set of directed edges $E \subset X \times X$ where  $\forall (x_i, x_j) \in E$: $x_i <_L x_j$ and the partial order $P$ (viewed as a DAG) together with $E$ contains a Hamiltonian path. The complementing set has size $O(\log e(P))$ and can therefore be outputted in $O(\log e(P))$ time. 
Cardinal~\cite{cardinal_sorting_2010} have a query algorithm that runs in $O(\log e(P) + n)$ time and may thus output a linked list of $X$ in its sorted order. 
 \Van~\cite{van_der_hoog_preprocessing_2019} study the partial order sorting problem for the restricted case where $P$ can be induced from a set of intervals (where two elements are incomparable whenever the intervals overlap). To avoid trivial $\Omega(n)$ lower bounds for reporting a linked list, they propose to output either a linked list or a balanced binary tree on $X$ (in its sorted order), by providing a pointer to either the head of the linked list or the root of the binary tree. Consider the example where $P$ is a linear order and thus $\log e(P) \in O(1)$. This definition of output allows one to preprocess $P$ in $O(n \log n)$ time and output the sorted order of $X$ in $O(1)$ time in the query phase. 
We preprocess any partial oracle $O_P$ to (given $O_L$) output either a complementing set, a pointer to a linked list on $X$, or the root of a balanced binary tree on $X$ in its sorted order.

\paragraph{Decision Trees (DT).}
For our lower bound, we define decision trees that are a strictly stronger model of computation
than the RAM and pointer machine model.
In the preprocessing phase, a DT is a binary tree $T$ that receives as input an arbitrary partial oracle $O_P$. Each inner node of $T$ stores an ordered pair
of integers representing the query: ``$x_i \prec_P x_j$?''. For each inner node, its left branch corresponds
to $x_i \not \prec_P x_j$ and its right branch to $x_i \prec_P x_j$. For a leaf $\ell$ in $T$,
denote by $E_\ell$ the set of all directed edges $(x_i, x_j)$ where there exists a node $v$ on
the root-to-leaf path to $\ell$ that stored $x_i \prec_P x_j$ and where the path continues right. Each leaf $\ell$ stores the  DAG $Q_\ell = \tc{E_\ell}$.

In the query phase, a DT is a family of trees $\{ T_Q \}$ with a tree for each transitively closed DAG $Q$.
Each tree $T_Q$ receives as input an arbitrary linear oracle $O_L$ where the linear order $L$ extends $Q$.  
It is a binary tree where each inner node corresponds to a query ``$x_i <_L x_j$?''. For a leaf $s$ in the tree, denote by $E_s$ the edge set of
all directed edges $(x_i, x_j)$ where there exists a node $v$ on the root-to-leaf path to $s$ that stored $x_i <_L x_j$ and where the path continues right.
For any DT in the query phase, we require for each leaf $s$ that the graph $Q \cup E_s$ contains a Hamiltonian path.

Any deterministic and correct pair of  preprocessing and query algorithm on a RAM or pointer machine has a corresponding pair of DTs $(T, \{ T_Q\} )$. The worst-case number of queries performed by the preprocessing algorithm is lower bounded by the height of $T$. The worst-case number of queries performed by the query phase algorithm is lower bounded by $\max_{\ell \in T} \textsc{height}(T_{Q_\ell})$.

\paragraph{Chains and antichains.}
A \emph{chain} in $P$ is a set $C \subseteq X$ of pairwise \emph{comparable} elements.
Note that $P$ defines a linear order on $C$ and that $C$ thus corresponds to a directed path in $P$. 
The \emph{length} of a chain $C$ is $|C|$.
A chain is \emph{maximum} if its length is maximal among all chains.
A \emph{greedy chain decomposition} as introduced by \Cardinal~\cite{cardinal_sorting_2010} is
a partition $C$ of $X$ into chains $C_1, \dots, C_k$ such that $C_i$ is a maximum chain
on $(P, X \setminus (C_1 \cup \dots \cup C_{i-1}))$.
The complexity of computing a greedy chain decomposition is $O(n^{2.5})$ and $\Omega(n^2)$.
We may associate a partial order $P' = \scoop_{i=1}^{k} C_k$ to $C$
by taking the total order induced by $P$ on each $C_i$; making elements of different chains incomparable.
An \emph{antichain} in $P$ is a set of pairwise \emph{incomparable} elements.
The \emph{width} of $P$ is the size of the largest antichain.
By Dilworth's theorem, this equals the minimum number of chains needed to cover $P$.

\paragraph{Log-extensions, graph entropy, and the incomparability graph.}
For any poset $P$ we denote by $e(P)$ the number of linear extensions of $P$. We define the \emph{log-extensions} of $P$ to be $\log e(P)$ (henceforth, all logarithms are base $2$). 
To approximate the log-extensions of $P$, we define the \emph{incomparability graph} $\ol{P}$ of $P$ as the undirected graph on $X$ with edges
between any pair of incomparable pairs of elements.
The \emph{(graph) entropy} of an undirected graph $G$ was introduced by K\"orner 1973~\cite{korner1973coding}.
For a set $S \subseteq V(G)$, let $\chi^S$ denote its \emph{characteristic vector},
with value $1$ for every vertex in $S$ and $0$ otherwise. Let:
\[
\Stab(G) = \textnormal{Convex Hull}\set{\chi^S \ \Big| \ S \text{ is an independent set in } G},
\]
then the \emph{entropy} is defined as:
\[
H(G) = \min_{\lambda \in \Stab(G)} -\frac{1}{n} \sum_{u \in V(G)} \log(\lambda(u)).
\]
Note that $0 \le H(G) \le \log n$ and $H(G') \le H(G)$ for any spanning subgraph $G' \subseteq G$.
The graph entropy allows us to approximate the log-extensions of $P$:
\begin{lemma}[Theorem~1.1 in~\cite{kahn_entropy_1992}, improved to Lemma~4 in~\cite{cardinal_sorting_2010}]
\label{lemm:eentropy}
For any partial order $P$:
\[
\log e(P) \le n \cdot H(\ol P) \le 2 \log e(P)
\]
\end{lemma}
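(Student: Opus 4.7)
The statement packages two inequalities, and the plan is to prove each by a separate information-theoretic argument relating graph entropy to linear extension counts.

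\textbf{Upper bound} $\log e(P) \le n H(\ol P)$: Let $\lambda^\star$ attain the minimum in the definition of $H(\ol P)$. Since $\lambda^\star \in \Stab(\ol P)$, write $\lambda^\star = \sum_S \mu_S \chi^S$ as a convex combination of characteristic vectors of independent sets $S$ of $\ol P$ — which are exactly the chains of $P$. Think of $\mu$ as a probability distribution on chains. The plan is to design an encoding of a uniformly random linear extension $L$: sample chains $S_1, S_2, \dots$ i.i.d.\ from $\mu$, and for each $u$ record the index of the first chain containing it together with its position inside that chain (noting that the relative order within any sampled chain is forced by $P$). The first-covering index is geometric with parameter $\lambda^\star(u)$, so Shannon coding describes it in $-\log \lambda^\star(u) + O(1)$ expected bits; summing over $u$ yields expected description length $n H(\ol P) + O(n)$, and a careful accounting of the interleaving between chains contributes only lower-order terms. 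Shannon's source-coding theorem then delivers $\log e(P) \le n H(\ol P) + o(n)$, and the additive slack is absorbed by blocking (encoding $k$ independent copies of $L$ and dividing).

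\textbf{Lower bound} $n H(\ol P) \le 2 \log e(P)$: The plan is to exhibit a concrete feasible $\lambda \in \Stab(\ol P)$ built from a uniformly random linear extension $L$, and then bound $-\sum_u \log \lambda(u)$ by $2 \log e(P)$. For each $u$, let $A_u(L)$ be the set of elements appearing in $L$ in a carefully chosen window around the rank of $u$; with positive probability $A_u(L)$ is an antichain of $P$, hence an independent set of $\ol P$. Averaging $\chi^{A_u(L)}$ over $L$ and rescaling produces a $\lambda \in \Stab(\ol P)$ whose coordinate $\lambda(u)$ is at least a constant multiple of the expected window size divided by $n$. A Jensen step moves the logarithm inside the expectation, converting $-\log \lambda(u)$ into a quantity that is exactly a conditional-entropy term from the chain-rule expansion of $H(L) = \log e(P)$; summing over $u$ and bookkeeping the constants yields the factor~$2$.

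\textbf{Main obstacle:} The harder direction is the lower bound. The windows must be chosen so that (i) $A_u(L)$ is genuinely an antichain with sufficient probability, (ii) the averaged vector $\lambda$ lies in $\Stab(\ol P)$, and (iii) the resulting $-\sum_u \log \lambda(u)$ is controlled by $2\log e(P)$ with the advertised constant. Kahn--Kim's original argument gave a weaker constant; the sharpening in~\cite{cardinal_sorting_2010} achieves the exact factor~$2$ by a more careful window selection and conditional-entropy accounting, so matching that constant is where the technical work concentrates. The upper bound, by contrast, is a relatively clean Shannon-coding exercise once the convex decomposition of $\lambda^\star$ into chains is in hand.
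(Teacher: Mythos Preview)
The paper does not prove this lemma at all: it is quoted as a black box from Kahn--Kim and Cardinal et al., so there is no ``paper's own proof'' to compare against. Your proposal is therefore being judged on its own merits, and it has a real gap in the upper-bound direction.

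Your encoding scheme for $\log e(P)\le nH(\ol P)$ does not encode $L$. Sampling chains $S_1,S_2,\dots$ i.i.d.\ from $\mu$ and recording, for each $u$, the first index $i(u)$ with $u\in S_{i(u)}$ produces a (random) partition of $X$ into chains --- but the position of $u$ inside its chain is fixed by $P$, not by $L$, so nothing you have written down depends on $L$. All of the information in $L$ lives precisely in the interleaving of these chains, which you dismiss as ``lower-order terms''; that is the entire $\log e(P)$ bits you are trying to account for. The argument cannot be repaired by blocking or by absorbing an $O(n)$ slack. The actual proof of this direction goes through Stanley's transfer between the order polytope and the chain polytope (which is $\Stab(\ol P)$): one shows $e(P)/n!=\mathrm{vol}(\Stab(\ol P))$ and then uses that the box $\prod_u[0,\lambda^\star(u)]$ sits inside a suitable region, giving $e(P)\le\prod_u 1/\lambda^\star(u)$ directly, with no coding argument.

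Your lower-bound sketch is closer in spirit to the literature --- Kahn--Kim do build a feasible $\lambda$ from rank statistics of a uniformly random linear extension --- but ``$A_u(L)$ is an antichain with positive probability'' is not the right condition (you need the \emph{averaged} vector to lie in $\Stab(\ol P)$, which requires a specific construction, not just positive probability), and the ``Jensen step'' converting $-\log\lambda(u)$ into a conditional-entropy term is exactly the part that carries all the difficulty of getting the constant~$2$. As written this is a statement of intent rather than a proof plan; the actual constructions in \cite{kahn_entropy_1992} and \cite{cardinal_sorting_2010} are considerably more specific about what $\lambda(u)$ is.
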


\noindent
Greedy chain decompositions provide yet another approximation for the log-extensions of $P$:

\begin{theorem}[Theorem~2.1 in~\cite{cardinal_efficient_2010}, rephrased as in Theorem~1 in \cite{cardinal_sorting_2010}] \label{theo:greedy}
If $C$ is a greedy chain decomposition of $P$, then for every $\varepsilon > 0$
\[
H(\ol{C}) \le  (1+\varepsilon) H(\ol P) + (1+\varepsilon) \log(1+1/\varepsilon),
\]
Setting $\varepsilon = 1$ yields $
H(\ol C) \le  2 H(\ol P) + 2.$
\end{theorem}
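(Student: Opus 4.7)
The plan is to translate the entropy inequality into a statement about linear-extension counts via Lemma~\ref{lemm:eentropy}, then exploit the defining property of a greedy chain decomposition.

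First, since the partial order associated to $C$ is $\scoop_i C_i$, two elements are incomparable precisely when they lie in different chains, so $\ol{C}$ is the complete multipartite graph with parts $C_1, \dots, C_k$. Its independent sets are subsets of a single $C_i$, and optimising $-\tfrac{1}{n}\sum_v \log \lambda(v)$ over $\lambda \in \Stab(\ol{C})$ — the minimiser can be shown to be constant on each part by Jensen — gives $\lambda(v) = |C_{i(v)}|/n$ and hence
\[
  H(\ol{C}) \;=\; \sum_{i=1}^{k} \frac{n_i}{n} \log \frac{n}{n_i}, \qquad n_i := |C_i|.
\]
By Stirling, $n H(\ol{C}) = \log \binom{n}{n_1, \dots, n_k} + O(k \log n)$, and a linear extension of $\scoop_i C_i$ is a multinomial interleaving of the chains, so $e(\scoop_i C_i) = \binom{n}{n_1, \dots, n_k}$. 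Combined with $\log e(P) \le n H(\ol{P})$ from Lemma~\ref{lemm:eentropy}, the theorem reduces (after absorbing $O(k \log n)$ into the slack) to the combinatorial inequality
\[
  \log e\!\bigl(\textstyle\scoop_i C_i\bigr) \;\le\; (1+\varepsilon)\, \log e(P) + (1+\varepsilon)\, n \log(1 + 1/\varepsilon).
\]

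To prove this I would use the greedy property: each $C_i$ is a longest chain of $P_i := P \setminus (C_1 \cup \dots \cup C_{i-1})$, so $P_i$ has height exactly $n_i$, and by Mirsky's theorem decomposes into $n_i$ antichains of $P$; by pigeonhole one of them has size $\ge |P_i|/n_i$. Any antichain of size $a$ in $P$ is a clique of size $a$ in $\ol{P}$ and contributes $\log(a!) = \Omega(a \log a)$ to $\log e(P)$ (its elements may be freely permuted in any fixed extension of the rest). I would split the chains at the threshold $1/\varepsilon$: the short chains ($n_i < 1/\varepsilon$) contribute to $H(\ol{C})$ at most
\[
  \sum_{i :\, n_i < 1/\varepsilon} \frac{n_i}{n} \log \frac{n}{n_i} \;\le\; \log(1 + 1/\varepsilon),
\]
matching the additive slack; the long chains ($n_i \ge 1/\varepsilon$) must be charged against $(1+\varepsilon) \log e(P)$ via the Mirsky antichains inside the corresponding $P_i$.

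The main obstacle is making the long-chain charging quantitatively tight. My plan is to telescope
\[
  \log e\!\bigl(\textstyle\scoop_i C_i\bigr) \;=\; \sum_{i=1}^{k} \log \binom{|C_1 \cup \dots \cup C_i|}{n_i},
\]
and bound, for each long index $i$, the $i$th log-binomial against the entropy contribution of the Mirsky antichain in $P_i$ of size $\ge |P_i|/n_i$. The parameter $\varepsilon$ tunes how tightly this charging can be done: for $n_i \ge 1/\varepsilon$, the binomial $\binom{|C_1 \cup \dots \cup C_i|}{n_i}$ lies within a factor $(1+\varepsilon)$ of the antichain contribution, while any residual loss is absorbed into the additive $\log(1 + 1/\varepsilon)$ term. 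Concluding the full inequality then reduces to careful bookkeeping of the greedy chain lengths $n_1 \ge n_2 \ge \dots \ge n_k$ across the split.
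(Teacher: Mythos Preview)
The paper does not prove this theorem at all: it is quoted from \cite{cardinal_efficient_2010,cardinal_sorting_2010} as a preliminary and used as a black box. So there is nothing in the present paper to compare your argument against.

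That said, your proposal has genuine gaps that would need to be repaired before it could stand on its own. First, the detour through Stirling introduces an error of order $O(k\log n)$ that you propose to ``absorb into the slack''; but $k$ can be as large as $n$ (take $P$ an antichain), so this error can be $\Theta(n\log n)$ while the slack $(1+\varepsilon)\,n\log(1+1/\varepsilon)$ is only $O(n)$. Second, your short-chain bound
\[
  \sum_{i:\,n_i<1/\varepsilon}\frac{n_i}{n}\log\frac{n}{n_i}\ \le\ \log(1+1/\varepsilon)
\]
is false: with $\varepsilon=1/2$ and $P$ an antichain of size $n$, every chain has $n_i=1$, the left side is $\log n$, and the right side is $\log 3$. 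So the strategy of charging all short chains solely to the additive term cannot work. Third, the long-chain charging is only sketched: you assert that for $n_i\ge 1/\varepsilon$ the log-binomial is ``within a factor $(1+\varepsilon)$ of the antichain contribution,'' but the Mirsky antichain in $P_i$ has size $\ge |P_i|/n_i$, and it is not clear how the resulting $\log\bigl((|P_i|/n_i)!\bigr)$ terms, summed over $i$, control $(1+\varepsilon)\log e(P)$ without double-counting (the antichains from different $P_i$ may overlap in $P$). The actual proof in \cite{cardinal_efficient_2010} works directly with the stable-set polytope rather than passing through extension counts, which avoids all three issues.
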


\noindent
We note that if $C = (C_1, \ldots, C_l)$ then a short computation shows that:
\[
H(\ol C) = -\sum_{i=1}^{l} \frac{|C_i|}{n} \log \Big(\frac{|C_i|}{n}\Big) = \sum_{i=1}^{l} \frac{|C_i|}{n} \log \Big(\frac{n}{|C_i|}\Big).
\]

Let $C_0, \ldots, C_l$ be a greedy chain decomposition of $P$.
We can store $C_0, \ldots, C_l$ using only $O(n)$ space. 
For any partial order that is a collection of chains, there exists an asymptotically optimal merging strategy corresponding to Huffman codes:

\begin{lemma}[Theorem 2~\cite{cardinal_sorting_2010}] \label{lemm:huffmanmerge}
    Let $Y \subseteq X$. Let $P'$ be the partial order induced by $P$ on $Y$. Let $C = \set{C_i}_{i=1}^l$ be a greedy chain decomposition of $P'$.
    For any linear oracle $O_L$ extending P, consider the linear oracle $O_{L'}$ where $L'$ is a linear order on $Y$ induced by $L$.
    Given $O_{L'}$, repeatedly
    merging the two smallest chains in $\set{C_i}_{i=1}^l$ (until we uncover $L'$ on $Y$) takes $O(|Y| \cdot (1 + H(\ol P')) = O(\log e(P') + |Y|)$ linear oracle queries and time.
    In addition, $O(\log e(P') + |Y|) \subseteq O(\log e(P) + |Y|)$.
\end{lemma}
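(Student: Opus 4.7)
The plan is to analyze the cost of the repeated two-smallest merging procedure as a Huffman coding bound, and then convert the resulting entropy estimate into the claimed bound via Theorem~\ref{theo:greedy} and Lemma~\ref{lemm:eentropy}. First, I would observe that merging two already-sorted chains of lengths $a$ and $b$ with a standard two-pointer merge uses at most $a+b-1$ linear oracle queries and $O(a+b)$ time. Record the sequence of merges as a binary tree $T$ whose $l$ leaves are the initial chains $C_1,\dots,C_l$ and whose internal nodes are the merged results. Each element $x \in C_i$ participates in exactly $d_i$ merges, where $d_i$ is the depth of the leaf of $C_i$ in $T$, so the total cost of all merges is at most $\sum_{i=1}^{l} d_i |C_i| + O(l)$, the $O(l)$ accounting for priority-queue bookkeeping on chain lengths; since $l \le |Y|$, this is at most $\sum_i d_i |C_i| + O(|Y|)$.

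Next, the rule \emph{merge the two smallest chains} is exactly Huffman's algorithm on the weights $|C_1|,\dots,|C_l|$, so $T$ is a Huffman tree for the distribution $p_i := |C_i|/|Y|$. By the classical Huffman/Shannon bound, $\sum_i p_i d_i \le H(p) + 1$ with $H(p) = -\sum_i p_i \log p_i$. Multiplying by $|Y|$ and using the identity $H(\ol C) = \sum_i (|C_i|/|Y|) \log(|Y|/|C_i|)$, which is the same short computation cited in the excerpt applied to the chain decomposition $C$ of $P'$ on $Y$, I obtain a total merge cost of
\[
\sum_i d_i |C_i| + O(|Y|) \;\le\; |Y|\bigl(H(\ol C) + 1\bigr) + O(|Y|) \;=\; O\bigl(|Y|\,(1 + H(\ol C))\bigr).
\]

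I would then convert $H(\ol C)$ into $H(\ol{P'})$ and then into $\log e(P')$. Applying Theorem~\ref{theo:greedy} with $\varepsilon = 1$ to the greedy chain decomposition $C$ of $P'$ gives $H(\ol C) \le 2 H(\ol{P'}) + 2$, so the cost is $O(|Y|(1 + H(\ol{P'})))$; applying Lemma~\ref{lemm:eentropy} to the poset $(P', Y)$ gives $|Y|\cdot H(\ol{P'}) \le 2 \log e(P')$, yielding the claimed $O(|Y| + \log e(P'))$ bound. For the final inclusion $O(\log e(P') + |Y|) \subseteq O(\log e(P) + |Y|)$, I would observe that restricting any linear extension $L$ of $P$ to $Y$ yields a linear extension of $P'$, and that this restriction map is surjective: for any linear extension $L'$ of $P'$, the digraph $P \cup L'$ (viewing $L'$ as directed edges on $Y$) is acyclic, since any would-be cycle would force some $y, y' \in Y$ with $y <_{L'} y'$ yet $y' \prec_P y$ (by transitivity of $P$ through the non-$Y$ vertices on the cycle), contradicting that $L'$ extends $P' = P|_Y$. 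Hence $e(P') \le e(P)$.

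The main obstacle I expect is the clean bookkeeping in the Huffman stage: the Shannon inequality contributes a ``$+1$'' per leaf (giving $|Y|$ after rescaling), the priority-queue maintenance costs an additional $O(l)$, and the per-merge two-pointer step has its own constant---all three must be absorbed into the single $O(|Y|)$ additive term in the final bound. Since $l \le |Y|$ and the rescaled Shannon slack contributes exactly $|Y|$, everything fits, but it is worth stating explicitly to avoid a hidden dependence on $l$ or $\max_i |C_i|$.
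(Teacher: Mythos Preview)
Your argument is correct and is essentially the standard derivation of this result. Note, however, that the paper does not give its own proof of this lemma: it is quoted verbatim as Theorem~2 of \cite{cardinal_sorting_2010} in the preliminaries and used as a black box. So there is nothing to compare against in this paper; what you have written is a faithful reconstruction of the Cardinal--Fiorini--Joret--Jungers--Munro argument (Huffman merging bounded by Shannon entropy, then Theorem~\ref{theo:greedy} to pass from $H(\ol C)$ to $H(\ol{P'})$, then Lemma~\ref{lemm:eentropy} to reach $\log e(P')$).

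One small bookkeeping remark: you claim the priority-queue overhead is $O(l)$. With a generic heap this is $O(l\log l)$, which would need a short argument to absorb into $O(|Y|(1+H(\ol C)))$. The clean fix is to observe that in a greedy chain decomposition the lengths $|C_1|\ge|C_2|\ge\cdots\ge|C_l|$ are already sorted, so Huffman's two-queue method runs in $O(l)$ after this (free) sort; alternatively, note that $l\log l \le |Y| + |Y|H(\ol C)$ because each chain of length $|C_i|\le |Y|/2$ contributes at least $|C_i|\ge 1$ to $|Y|H(\ol C)$, and at most one chain can exceed $|Y|/2$. Either way the claimed bound holds. Your surjectivity argument for $e(P')\le e(P)$ is also fine.
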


\noindent
Finally, we use the following entropy argument by \cardinal~\cite{cardinal_sorting_2010}:
\begin{lemma}[Lemma~5 in~\cite{cardinal_sorting_2010}] \label{lemm:longentropy}
    Let $C$ be a maximum chain in $P$, then
    $
        n - |C| \in O(n \cdot H(\ol P)) = O(\log e(P)). 
    $
\end{lemma}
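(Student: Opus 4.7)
The plan is to prove the first inclusion $n - |C| \in O(n \cdot H(\ol P))$ directly from the definition of graph entropy, and then appeal to Lemma~\ref{lemm:eentropy} to conclude $O(n \cdot H(\ol P)) \subseteq O(\log e(P))$.

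The key observation is that the independent sets of $\ol P$ are exactly the chains of $P$. Hence $\Stab(\ol P)$ is the convex hull of the characteristic vectors $\chi^D$ of chains $D$ in $P$. Since $C$ is a \emph{maximum} chain, every chain $D$ satisfies $\sum_{u \in X} \chi^D(u) = |D| \le |C|$. Taking convex combinations, every $\lambda \in \Stab(\ol P)$ satisfies $\sum_{u \in X} \lambda(u) \le |C|$. The first step I would carry out is this averaging observation; it is where the maximality of $C$ enters.

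Next, I would apply Jensen's inequality to the convex function $-\log$ applied to the distribution $\lambda/n$ over $X$. For any $\lambda \in \Stab(\ol P)$,
\[
-\frac{1}{n}\sum_{u \in X} \log \lambda(u) \;\ge\; -\log\Bigl(\frac{1}{n}\sum_{u \in X} \lambda(u)\Bigr) \;\ge\; \log\!\Bigl(\frac{n}{|C|}\Bigr).
\]
Minimising the left-hand side over $\lambda \in \Stab(\ol P)$ gives $H(\ol P) \ge \log(n/|C|)$. Writing $\varepsilon = (n-|C|)/n \in [0,1)$, the elementary estimate $-\log_2(1-\varepsilon) \ge \varepsilon / \ln 2$ (which holds because the derivative of $-\log_2(1-x)$ is at least $1/\ln 2$ on $[0,1)$) yields
\[
H(\ol P) \;\ge\; -\log_2\!\Bigl(1 - \tfrac{n-|C|}{n}\Bigr) \;\ge\; \frac{n-|C|}{n \ln 2},
\]
so $n-|C| \le (\ln 2)\, n \cdot H(\ol P) \in O(n\cdot H(\ol P))$.

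Finally, Lemma~\ref{lemm:eentropy} gives $n \cdot H(\ol P) \le 2 \log e(P)$, so $n - |C| \le 2\ln 2 \cdot \log e(P) \in O(\log e(P))$, as desired. I do not anticipate a serious obstacle: the only substantive step is recognising that maximality of $C$ bounds the total mass of any feasible $\lambda$, after which Jensen's inequality and a one-line Taylor estimate finish the argument.
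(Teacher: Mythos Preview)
Your proof is correct. The paper does not supply its own proof of this lemma; it simply quotes it as Lemma~5 of Cardinal, Fiorini, Joret, Jungers and Munro~\cite{cardinal_sorting_2010}. Your argument --- observing that independent sets of $\ol P$ are exactly chains of $P$, so every $\lambda \in \Stab(\ol P)$ has total mass at most $|C|$, and then applying Jensen's inequality followed by the estimate $-\log_2(1-\varepsilon) \ge \varepsilon/\ln 2$ --- is a clean and self-contained derivation, and in fact mirrors the standard entropy proof of this bound.
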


\paragraph{Finger search trees. }
A \emph{finger search tree} is a data structure that maintains a linearly ordered set plus
an arbitrary number of \emph{fingers} that each point to an element of the set.
Let $n$ denote the size of the sorted set.
We require the following operations:
\begin{itemize}[noitemsep]
\item (Build) Build a new tree on $n$ elements in $O(n \log n)$ time.
\item (Binary Search) Given a new element $y$, create a new finger at the largest element in the tree that is less than $y$ in $O(\log n)$ time.
\item (Exponential Search) Given a finger and a new element $y$, move the finger to
largest element in tree that is less than $y$.
This takes $O(1 + \log d)$ time where $d$ is the number of elements between the old and new finger.
\item (Insert) Given a finger, insert a new element directly after the finger in $O(1)$ \emph{amortized} time, i.e. $k$ insertions take $O(k)$ time in total, independent of any other operations performed.
\end{itemize}
The space usage is $O(n + f)$ where $f$ is the current number of fingers.
Finger search trees can be implemented as level-linked red-black trees \cite{tarjan_on_1988}.
Some care has to be taken to guarantee $O(1)$ amortized insertion time after building.
More complicated worst-case variants are described in \cite{brodal_optimal_2002}.

% BUILD_COMMAND make main.pdf -B
\section{Algorithm description}
We give an overview of our algorithm. 
Our algorithm works in two phases. First, there is a preprocessing phase during which the algorithm
interacts with the partial oracle $O_P$. After that, the algorithm enters the query phase and may interact with the linear oracle $O_L$.
Our algorithm takes a parameter $c \ge 1$ specifying the trade-off between the running time of the preprocessing phase and query phase. Specifically, our preprocessing phase uses $O(n^{1  + \frac{1}{c}})$ time and our query phase takes $O(c \log e(P))$ time.
The respective algorithms are described in \cref{algo:preproc} and \cref{algo:query}.
If $O(\log e(P))$ is  (super)-linear, then after the query phase we may report all points in $X$ in their sorted order in $O(\log e(P))$ time. Otherwise, we either output a complementing set of $P$, or point to the root of a leaf-linked balanced binary tree that stores the set $X$ in its sorted order. We describe how to output our tree. The complementing set may be obtained by recording all successful linear oracle queries when constructing the tree. 

\subparagraph{Algorithmic description.}
During preprocessing, we compute the decomposition of Definition~\ref{def:decomp}. 
We first remove an (approximately) maximum chain $C_0$ from $P$ and store it in a balanced finger search tree $T_0$ (in order along $C_0$). 
For some smartly chosen $w$, we compute a maximal set $A$ of antichains of size at least $w+1$ of $P - C_0$.  
This gives a set $Y = P - (C_0 \cup  A)$ which we obtain as a set of $w$ chains. 
We then partition $Y$ into a greedy chain decomposition $C = \{ C_i' \}_1^\ell$.
Each $y \in C$ receives a finger to the maximum $x \in T_0$ with $x \prec_P y$. each $y \in C$ also receives a finger to the minimal $x \in T_0$ with $y \prec_P x$. 

At query time, we merge the chains in $C$ via \cref{lemm:huffmanmerge} into one chain $C^*$. We then merge $C^*$ with $C_0$. We do this by starting for all $y \in C^*$ an exponential search in $T_0$ from the fingers of $y$. 
Finally, we insert all $y \in A$ into $T_0$ through a regular binary search. 

The main difficulty in our approach is that computing a maximum chain $C_0$ or a greedy chain decomposition takes $\Omega(n^2)$ time.
We avoid this lower bound in two ways: first, we compute only an approximately longest chain $C_0$. 
Second, after we have deleted large antichains from $P$, we show that we may assume that $Y  = P - (C_0  \cup A)$ is a set of at most $w$ chains. This allows us to compute a greedy chain decomposition of $Y$ in $O(n w^2)$ time.  
%This comes at a cost in efficiently and algorithmic complexity: The number of oracle queries increase by a constant factor and their number becomes more difficult to upper bound. 
Having merged $C$ into one long chain $C^*$, we need to merge $C^*$ and $C_0$ into a single chain. To avoid the quadratic preprocessing of $C_0$, and linear query time from the query algorithm by  \Cardinal~\cite{cardinal_sorting_2010}, we rely upon exponential search.

\begin{algorithm}[h]
\caption{Preprocessing. Input: partial oracle $O_P$, $c \ge 1$}\label{algo:preproc}
\begin{algorithmic}[1]
    \State $C_0 \gets \text{approximate maximum chain in } P$ \Comment{Via \cref{theo:majk}}
    \State $T_0 \gets \text{balanced finger search tree on } C_0$
    \State $w \gets \frac{1}{2} n^{1/(3c)}$
    \State $A \gets $ maximal set of antichains of size at least $w + 1$ of $P - C_0$.\Comment{Via \cref{lemm:widthdecomp}}
    \State $Y \gets P -  ( C_0 \cup A )$, obtained as a set of $w$ chains \Comment{Via \cref{lemm:widthdecomp}}
    \State $C \gets $ greedy chain decomposition of $Y$ \Comment{Via ~\cref{lemm:greedylowwidth}} 
    \State $E_0 \gets \emptyset$
    \ForAll{$y \in C$}
        \State $y$ gets fingers to $\alpha_y  = \arg \max\{x \in C_0 \ | \ x \prec_P y \}$ and $\beta_y = \arg \min\{x \in C_0 \ | \ y \prec_P x \}$
        \State $E_0 = E_0 \cup \{ (\alpha_y, y)$, $(y, \beta_y) \}$  
    \EndFor
    \State $P' =  \Big(C_0 \sqcup C \sqcup A\Big) \cup E_0$
    \State \Return $(P', T_0, C, A)$. 
\end{algorithmic}
\end{algorithm}

\begin{algorithm}[h]
\caption{Query Phase. Input: tree $T_0$, chain decomposition $C$, antichains $A$, oracle $O_L$}\label{algo:query}
\begin{algorithmic}[1]
\State $C^* \gets \text{chainMerge}(C)$ \Comment{Via \cref{lemm:huffmanmerge}}
\State $y_{\prev} \gets \text{nil}$
\ForAll{$y \in C^*$ in order}
    \State $\ell_y \gets \max_L(y_{\prev}, \alpha_y)$
    \State Insert $y$ into $T_0$ via exponential search starting at $\ell_y$
    \State $y_{\prev} \gets y$
\EndFor
\ForAll{$y \in A$}
    \State Insert $y$ into $T_0$ with binary search.
\EndFor
\State \textbf{return} $T_0$
\end{algorithmic}
\end{algorithm}

\section{Analysing the Preprocessing Phase}

We show that our preprocessing uses at most $O(n w^2)$ time and queries by showing \cref{theo:majk} and  \cref{lemm:widthdecomp,lemm:greedylowwidth} -- whose application dominate the running time of \Cref{algo:preproc}.
In the next section, we analyse the query phase.
We begin by showing \cref{theo:majk}:

\majk*

\begin{proof}
    Consider an algorithm that iteratively considers two elements $y_1, y_2$ that are incomparable in $P$ and removes both of them, until no such pair exists.
    The remaining set of vertices in $P$ is per definition a chain $C$. We claim that if the longest chain in $P$ has a maximum length of $n - k$ then $C$ has a chain of length of at least  $n - 2k$.
Indeed, let $C^*$ be a maximum chain of length $n-k$. Whenever $y_1, y_2$ are incomparable,
at least one of them lies outside $C^*$. Thus, our procedure deletes at most $k$ pairs of elements,
implying $|C| \ge n - 2k$.

A straight-forward implementation of the above procedure takes $\Theta(n^2)$ time and partial oracle queries. To speed this up to $O(n \log n)$, we use a sorting algorithm
to find incomparable pairs. Indeed, run merge sort on $X$ using the partial oracle $P$. Whenever the merge step encounters two
incomparable elements, delete both of them from $X$ and continue with the merge step as usual. When merge sort
terminates, the remaining elements are linearly sorted and hence form a chain.
\end{proof}

\widthdecomp*

\begin{proof}
    (DKMRV)~\cite{daskalakis_sorting_2009}
    present an algorithm to partition posets of width $\leq w$ into $w$ chains.
    In Appendix~\ref{app:poset}, we adapt the core subroutine of  their algorithm
    (specifically, their `peeling' procedure)
    to instead iteratively find and delete antichains of size $w+1$. 
    This facilitates a divide-and-conquer approach that arbitrarily splits $P$
    into two parts and recursively partitions them into $w$ chains, plus some large antichains.
    To merge the two parts, we use our new subroutine to extract antichains of size $w+1$.
    After greedily extracting all large antichains, the remaining elements
    may be partitioned into $w$ chains by Dilworth's theorem.
\end{proof}

\noindent
Finally, given the output of \cref{lemm:widthdecomp}, we show:

\begin{lemma} \label{lemm:greedylowwidth}
    Given a poset $Q$ of width $\le w$, represented via an oracle,
    plus a decomposition of $Q$ into $w$ chains,
    a greedy chain decomposition of $Q$ can be computed in $O(n w^2)$ time and queries.
\end{lemma}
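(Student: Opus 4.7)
My plan is to compute one maximum chain of $Q$ in $O(nw)$ time and queries using the given decomposition $Q = D_1 \sqcup \dots \sqcup D_w$, and then iterate; the costs over all iterations will telescope to $O(nw^2)$.

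For the single max-chain computation, let $f(x)$ denote the length of the longest chain of $Q$ ending at $x$, and let $g_b(x)$ denote the greatest element of $D_b$ that is strictly below $x$ in $Q$ (or $\perp$ if none exists). The key identity is
\[
    f(x) \;=\; 1 + \max_{b \in [w]} f(g_b(x)),
\]
which holds because every $y \prec_Q x$ lies in some $D_b$, and then $y \le_{D_b} g_b(x)$ together with the chain order on $D_b$ forces $y \preceq_Q g_b(x)$ and $f(y) \le f(g_b(x))$. The plan is to first compute all $g_b(x)$, then compute $f$ by memoized depth-first search on the DAG with arcs $(g_b(x), x)$, and finally recover a maximum chain by backtracking from $\arg\max_x f(x)$.

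The main obstacle, and the step that drives the query bound, is computing all $g_b(x)$ within budget. I would use an amortized two-pointer scan for each ordered pair of chains $(a,b)$: iterate $x$ up $D_a$ while maintaining a pointer $p$ into $D_b$ that only ever advances; at each step, query the oracle to check whether the element immediately above $p$ in $D_b$ satisfies $\prec_Q x$, and advance $p$ precisely while the answer is yes. Monotonicity of $g_b(x)$ as $x$ climbs $D_a$ guarantees correctness, while each advancement and each stall is charged once, giving $O(|D_a| + |D_b|)$ queries per pair. Summing over the $w^2$ pairs gives $O(nw)$ queries, and the subsequent DFS computing $f$ and the backtracking each run in $O(nw)$ as well.

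For the full greedy decomposition I simply iterate. After removing the $i$-th maximum chain $C_i$, the updated sets $D_b \setminus C_i$ still partition the residual poset into at most $w$ chains, so the subroutine reapplies verbatim. A cover by at most $w$ chains forces some chain, hence also the next maximum chain, to have length at least $n_i / w$ by pigeonhole, so $n_{i+1} \le (1 - 1/w)\, n_i$. The geometric sum yields $\sum_i n_i \le w n$, and the total cost is $\sum_i O(n_i w) = O(nw^2)$ queries and time, matching the claim.
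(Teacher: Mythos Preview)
Your proposal is correct and follows essentially the same approach as the paper: sparsify the DAG to the $O(nw)$ edges given by each element's immediate predecessor in every other chain, find a longest path there in $O(nw)$ time, delete it, and iterate using the pigeonhole bound $|C_i| \ge n_i/w$ so that the costs sum geometrically to $O(nw^2)$. You supply more implementation detail than the paper does (the explicit two-pointer computation of the $g_b(x)$ values and the memoized recursion for $f$), but the underlying argument is the same.
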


\begin{proof}
    A greedy chain decomposition can be obtained by repeatedly finding a maximum chain
    and deleting its vertices from the $w$ chains.
    Finding a single maximum chain corresponds to a longest path problem in the corresponding DAG.
    There, we only need to consider the edges from each element to its successor
    in all other chains, plus the edges on the paths corresponding to the chains.
    This sparsifies the DAG to contain $O(n w)$ edges
    and a longest path can be found in $O(n w)$ time.
    A similar idea was used in the \textsc{ChainMerge} structure
    of \cite{daskalakis_sorting_2009}.
    Deleting the vertices takes $O(n)$ time.

    As $Q$ has width $\le w$, the maximum chain has length $\ge n/w$,
    hence the number of remaining vertices decreases geometrically by a factor $(1-1/w)$ after each iteration.
    Thus, the total running time is $O(n w \cdot 1 / (1-(1-1/w))) = O(n w^2)$.
    (Note that the greedy chain decomposition might contain \emph{more} than $w$ chains in general.)
\end{proof}

\noindent
We can combine our observations to upper bound our preprocessing running time:

\begin{claim}
\label{clai:runningtime}
The running time of \cref{algo:preproc} is $O(n^{1+1/c})$.
\end{claim}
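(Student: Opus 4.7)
The plan is to bound the cost of each step in Algorithm \ref{algo:preproc} separately, identify the dominant step, and then substitute the choice $w = \tfrac{1}{2} n^{1/(3c)}$ to obtain the target bound.

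Step 1 (an approximate maximum chain via Theorem \ref{theo:majk}) costs $O(n \log n)$, and step 2 (building the balanced finger search tree $T_0$ on $C_0$) costs $O(n \log n)$. Steps 4 and 5 are together a single invocation of Lemma \ref{lemm:widthdecomp} on $P - C_0$, which returns both the antichain family $A$ and a decomposition of $Y$ into at most $w$ chains, at cost $\widetilde{O}(n w^2)$ time and queries. Step 6 applies Lemma \ref{lemm:greedylowwidth} to the width-$\le w$ poset $Y$ equipped with its $w$-chain decomposition, costing $O(n w^2)$.

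The remaining work lives in lines 7--12. Once the fingers are known, assembling $E_0$ and $P'$ is linear, so the only nontrivial subtask is computing $\alpha_y$ and $\beta_y$ for each $y \in C$. I would use the following monotonicity: since $C_0 = c_1 \prec_P c_2 \prec_P \dots$ is a chain and $P$ is transitive, the set $\{c_i \in C_0 \mid c_i \prec_P y\}$ is a prefix of $C_0$, and symmetrically $\{c_i \in C_0 \mid y \prec_P c_i\}$ is a suffix. Hence I can locate $\alpha_y$ and $\beta_y$ by binary search in the balanced tree $T_0$ using $O(\log n)$ partial oracle queries per element, for a total of $O(n \log n)$.

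Plugging $w = \tfrac{1}{2} n^{1/(3c)}$ into the dominant $\widetilde{O}(n w^2)$ bound yields $\widetilde{O}(n^{1 + 2/(3c)})$. Because $c \ge 1$ is a fixed constant, the polylogarithmic factors hidden in $\widetilde{O}$ are absorbed by the polynomial slack $n^{1/(3c)} = n^{1/c - 2/(3c)}$, so the total running time is $O(n^{1 + 1/c})$. I do not expect any genuine obstacle here: the only subtlety is the monotonicity observation that makes binary search for the fingers valid, and everything else is a routine accounting of already-established lemmas.
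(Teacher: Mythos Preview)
Your proof is correct and follows essentially the same approach as the paper: step-by-step accounting via Theorem~\ref{theo:majk}, Lemma~\ref{lemm:widthdecomp}, Lemma~\ref{lemm:greedylowwidth}, and $O(\log n)$ binary search per finger, with the dominant $\widetilde{O}(n w^2)=\widetilde{O}(n^{1+2/(3c)})$ absorbed into $O(n^{1+1/c})$. The only difference is that you spell out the prefix/suffix monotonicity that justifies the binary search for $\alpha_y,\beta_y$, which the paper leaves implicit.
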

\begin{proof}
    By \cref{theo:majk}, constructing $C_0$ takes $O(n \log n)$ time, as does building the finger search tree.
    By
    \cref{lemm:widthdecomp}, we obtain $(A, Y)$ in $\tO(n w^2)$ time.
    Computing a greedy chain decomposition $C$ from $Y$ takes $O(n w^2)$ time by \cref{lemm:greedylowwidth}.
    Finally, for each $y \in C$ we may compute the corresponding fingers $\alpha_y$ and $\beta_y$ in $O(\log n)$ time each using binary search. The for-loop therefore takes $O(n \log n)$ time and the total running time is:
    $\tO(n \log n + n w^2) = \tO(n^{1 + 2c/3}) = O(n^{1+1/c})$.
\end{proof}

% BUILD_COMMAND make main.pdf -B

\section{Analysing the Query Phase}

In this section, we show that \cref{algo:query} returns a balanced binary tree on $X$ in $O(c \cdot \log e(P))$ linear oracle queries and $O(c \cdot \log e(P))$ time.
Recall that our input consists of $(C_0, A, C)$ where  $C_0$ is an approximately maximum chain in $P$, $A$ is a maximal set of antichains of size at least $w + 1$ in $P - C_0$ and that $C$ is a greedy chain decomposition of $Y = P - ( C_0 \cup A )$.
We first note:

\begin{claim}
The running time of \cref{algo:query} is linear in the number of linear oracle queries,
i.e. if \cref{algo:query} performs $Q$ exact oracle queries,
then its running time is $O(Q)$.
\end{claim}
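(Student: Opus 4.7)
The plan is to bound the running time of each of the three phases of \cref{algo:query} separately by the number of linear-oracle queries it makes, and then sum.

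I would first handle the chain merge on line~1. The subtle point is that a standard array-based merge of two sorted lists of sizes $a$ and $b$ takes $\Theta(a+b)$ time while making only between $\min(a,b)$ and $a+b-1$ comparisons, so the running time can exceed the query count by an unbounded factor. I would therefore implement each two-way merge on linked lists with tail pointers, so that once one input list is exhausted the remainder of the other is spliced in in $O(1)$ time. A merge of two non-empty chains then runs in $O(q)$ time, where $q \ge 1$ is the number of comparisons it performs. Summing across all merges in the Huffman schedule of \cref{lemm:huffmanmerge} gives line~1 time $O(Q_1)$, where $Q_1$ denotes the number of queries made during the chain merge.

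Next I would analyse the exponential-search loop of lines~3--7. Each iteration over $y \in C^*$ performs one oracle comparison to compute $\ell_y = \max_L(y_{\prev},\alpha_y)$, followed by an exponential search in $T_0$ from $\ell_y$ to the correct insertion point. That search uses $\Theta(1+\log d_y)$ queries and $O(1+\log d_y)$ time, where $d_y$ is the distance from $\ell_y$ to the final insertion point, and the ensuing finger-tree insertion is $O(1)$ amortized. Thus per-iteration time is at most a constant multiple of per-iteration queries, giving lines~3--7 time $O(Q_2)$. For the binary-search loop of lines~8--10, each $y \in A$ triggers a standard binary search in $T_0$ running in $\Theta(\log n)$ time and $\Theta(\log n)$ queries, plus $O(1)$ amortized insertion, so this loop runs in $O(Q_3)$ time.

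Summing the three bounds yields total time $O(Q_1 + Q_2 + Q_3) = O(Q)$. The main obstacle is the chain-merge step: in a naive array implementation its running time is essentially independent of the number of comparisons actually performed and can exceed $Q_1$ by an arbitrarily large factor. The linked-list splicing trick above is exactly what makes the per-merge (and hence total) running time of line~1 track the comparison count pointwise, which is what lets the whole analysis go through.
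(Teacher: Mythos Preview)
Your proposal is correct and follows the same three-part decomposition as the paper's proof: chain merge, exponential-search loop, binary-search loop, each shown to run in time proportional to the queries it issues. The paper's own proof is a bare assertion that \cref{lemm:huffmanmerge}, exponential search, and binary search each have running time linear in their query count; you supply the missing justification, and in particular your linked-list splicing argument for the two-way merges is exactly the point the paper sweeps under the rug when it claims the Huffman merge runs in time $O(Q_1)$ rather than merely in time bounded by the same \emph{upper bound} as $Q_1$.
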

\begin{proof}
The running time of \cref{lemm:huffmanmerge} is linear in the number of queries performed.
The same is true for the exponential search and binary search so this implies the claim. 
\end{proof}

\noindent
In the remainder, we upper bound the number of linear oracle queries that we perform. 

\paragraph{Approach.}
\cref{algo:query} consists of three components: (1) the merging of $C$ into a chain $C^*$, (2) the first for-loop and (3) the second for-loop. 
We analyse each component separately. 
For (1), we immediately apply  \cref{lemm:huffmanmerge} which runs in $O(|Y| + \log e(P')) \subset O(|Y| + \log e(P))$ queries. 
If $\log e(P)$ is sublinear, then $|Y|$ is the dominating term. 
We show that $|Y| \in O(\log e(P))$ and thus this procedure takes $O(\log e(P))$ queries.  
For (2), we show that merging $C^*$ into $C_0$ through exponential search uses $O(\log e(P))$ queries. Contrary to previous works that try to efficiently merge chains using the graph entropy,  we use a direct counting argument.
Finally for (3) we prove \cref{lemm:antiext}: showing that we may insert all elements in the antichains $A$ into $T_0$ using binary search in $O(c \log e(P))$ queries, this is the bottleneck of our approach which gives the factor $c$.

\paragraph{(1): Sorting $C$.} We sort all chains in $C$ into one chain $C^*$ using \cref{lemm:huffmanmerge}:

\begin{lemma}
    \label{lemm:sortingC}
    Let $(C_0, A, C)$ be the input as specified at the section start. Given the linear oracle $O_L$, we can sort all chains in $C$ into a single chain $C^*$ in $O(\log e(P))$ linear oracle queries. 
\end{lemma}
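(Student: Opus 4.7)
The plan is to invoke \cref{lemm:huffmanmerge} directly on the greedy chain decomposition $C$ of $Y$. That lemma produces $C^*$ in $O(|Y| + \log e(P))$ linear oracle queries, so the whole task reduces to showing $|Y| \in O(\log e(P))$.

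To bound $|Y|$, first observe $Y \subseteq P - C_0$, so it suffices to show $n - |C_0| \in O(\log e(P))$. Let $C^{\max}$ denote a true maximum chain of $P$, of length $n - k$. By \cref{lemm:longentropy}, $k = n - |C^{\max}| \in O(\log e(P))$. The chain $C_0$ returned by our approximate maximum-chain routine (\cref{theo:majk}) has length at least $n - 2k$, so
\[
|Y| \;\le\; n - |C_0| \;\le\; 2k \;\in\; O(\log e(P)).
\]

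Plugging this into \cref{lemm:huffmanmerge} applied to the poset $P'$ induced by $P$ on $Y$ with its greedy chain decomposition $C$, we obtain that the merge runs in $O(\log e(P') + |Y|) \subseteq O(\log e(P) + |Y|) = O(\log e(P))$ linear oracle queries, and outputs a single chain $C^*$ containing all elements of $Y$ in sorted order, as required.

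There is no real obstacle in this lemma: the only thing to verify is the chain-length slack from the approximate maximum-chain procedure, which costs at most a factor of two over the true deficiency $n - |C^{\max}|$ and is therefore absorbed by \cref{lemm:longentropy}. The parameter $c$ does not enter here; the factor $c$ will only appear later when inserting the antichains in $A$.
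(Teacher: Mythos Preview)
Your proof is correct and follows essentially the same line as the paper's: apply \cref{lemm:huffmanmerge} to the greedy chain decomposition $C$ of $Y$, then bound $|Y|\le n-|C_0|\le 2k$ using \cref{theo:majk} and \cref{lemm:longentropy}. Your notation is in fact cleaner than the paper's (which has a small slip, writing ``longest chain has length $k$'' where it means $n-k$).
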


\begin{proof}
The set $C$ is a greedy chain decomposition of $Y = P - ( C_0 \cup A )$.
Let $P'$ be the poset induced by $P$ on $Y$. 
    We apply \cref{lemm:huffmanmerge} to sort $C$ in $O( \log e(P) + |Y|)$ linear oracle queries and time. 
    What remains is to upper bound $|Y|$. Suppose that the longest chain in $P$ has length $k$.
    We apply \cref{lemm:longentropy} to note that  $k \in O(\log e(P))$.
    The chain $C_0$ has length at least $n - 2k$ and thus $Y$ has at most $2k$ elements. This implies $|Y| \in O(\log e(P))$ and thus concludes the lemma. 
\end{proof}

\paragraph{(2): The first for-loop.}

\begin{lemma}
    \label{lemm:first_loop}
    Let $(C_0, A, C)$ be the input as specified at the section start. The first for-loop of Algorithm~\ref{algo:query} inserts all $y \in C^*$ into $C_0$ requires $O(\log e(P))$ linear oracle queries. 
\end{lemma}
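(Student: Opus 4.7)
The plan is to bound the per-iteration cost by $O(1 + \log d_{y_i})$ for a carefully chosen quantity $d_{y_i}$, and then to show $\sum_i \log(1+d_{y_i}) \le \log e(P)$ via a counting argument that exhibits many linear extensions of a restricted subposet.

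First I would fix notation. Index $C^* = (y_1, \dots, y_m)$ in the order produced by \textsc{chainMerge}; by Lemma~\ref{lemm:huffmanmerge} this coincides with $<_L$. For each $i$, let $\pi_i$ be the number of elements of $C_0$ that are $<_L y_i$, and let $a_i, b_i$ denote the analogous quantities for $\alpha_{y_i}$ and $\beta_{y_i}$. By construction $a_i < \pi_i \le b_i$ and $(\pi_i)$ is non-decreasing. A short computation tracking the $T_0$-position of $\ell_{y_i} = \max_L(y_{i-1}, \alpha_{y_i})$ against that of $y_i$'s insertion point shows that the $i$-th exponential search traverses exactly $d_{y_i} := \pi_i - \max(a_i + 1, \pi_{i-1})$ intermediate $T_0$-elements, all of them in $C_0$; the $\max_L$ in the algorithm is what rules out previously-inserted $y_j$'s from this range (either $y_{i-1} <_L \alpha_{y_i}$, in which case every earlier $y_j \le_L y_{i-1}$ lies before $\alpha_{y_i}$, or $\ell_{y_i} = y_{i-1}$ and the sortedness of $C^*$ excludes them). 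By the finger-search-tree guarantee, iteration $i$ uses $O(1 + \log d_{y_i})$ linear-oracle queries.

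The main obstacle is then establishing $\sum_i \log(1 + d_{y_i}) \le \log e(P)$. The plan is to exhibit $\prod_i (1 + d_{y_i})$ distinct linear extensions of the restricted subposet $\tilde P := P|_{C_0 \cup C^*}$. For every tuple $(p_1, \dots, p_m)$ with $p_i \in [\max(a_i + 1, \pi_{i-1}), \pi_i]$, the $p_i$ are automatically non-decreasing, since $p_i \le \pi_i \le \max(a_{i+1} + 1, \pi_i) \le p_{i+1}$, and they satisfy $a_i + 1 \le p_i \le b_i$. Each such tuple therefore specifies a valid interleaving of $C_0$ and $C^*$ (placing $y_i$ so that exactly $p_i$ elements of $C_0$ precede it, breaking ties among $y_i$'s by $<_L$) that respects every relation of $\tilde P$, yielding $\prod_i (1 + d_{y_i})$ distinct linear extensions. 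Combined with the observation that every linear extension of $\tilde P$ lifts to one of $P$ by inserting the remaining $A$-elements one at a time in any topological order (so the restriction map $e(P) \to e(\tilde P)$ is surjective), this gives $\log e(P) \ge \log e(\tilde P) \ge \sum_i \log(1 + d_{y_i})$.

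To conclude, I would note that $|C_0| \ge n - 2k$, where $n - k$ is the length of the true longest chain in $P$ (Theorem~\ref{theo:majk}), so $m = |C^*| \le 2k$; Lemma~\ref{lemm:longentropy} then yields $k \in O(\log e(P))$, hence $m \in O(\log e(P))$. The total number of linear-oracle queries for the for-loop is therefore $\sum_i O(1 + \log d_{y_i}) = O(m) + O(\log e(P)) = O(\log e(P))$.
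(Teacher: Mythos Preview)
Your proposal is correct and follows essentially the same approach as the paper: bound each exponential search by $O(1+\log d_{y_i})$, exhibit $\prod_i d_{y_i}$ (or $\prod_i(1+d_{y_i})$) distinct linear extensions of $P$ restricted to $C_0\cup C^*$ to conclude $\sum_i\log d_{y_i}\le\log e(P)$, and bound $|C^*|\in O(\log e(P))$ separately via \cref{lemm:longentropy}. The only cosmetic difference is the parametrization of the constructed extensions: the paper slides each $y$ inside pairwise-disjoint intervals $I_y=[\pi(\ell_y)+1,\pi(y)]$ of the combined $L'$-order, whereas you parametrize by the number $p_i$ of $C_0$-elements preceding each $y_i$ and verify monotonicity of the $p_i$ directly; both encodings produce the same family of extensions.
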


\begin{proof}
    For each $y \in C^*$, let $y$ be inserted $d_y$ positions after the pointer $\ell_y$ in $C_0$. 
    I.e., if $d_y = 1$ then $y$ is inserted immediately after $\ell_y$. The first for-loop takes $O( \sum\limits_{y \in C^*} (1 + \log d_y)) = O(|C^*| + \sum\limits_{y \in C^*} \log d_y) = O(|Y| + \sum\limits_{y \in C^*} \log d_y)$ time. In Lemma~\ref{lemm:sortingC} we showed that $|Y| \in O(\log e(P))$, thus it remains to upper bound the second term.  

    Denote by $Z \subset X$ all vertices in $C_0 \cup C^*$ and by $P'$ the partial order induced by $P$ on $Z$. 
    Denote by $L'$ the linear order induced by $L$ on $Z$ and for all $x \in Z$ by $\pi(x)$ its index in $L'$.  
    To show that $\sum\limits_{y \in C^*} \log d_y \in O(\log e(P))$ we conceptually embed $Z$ onto $[1, |Z|]$ where each $x \in Z$ is a point $\pi(x)$. 
    
    For each $y \in C^*$, consider the interval $I_y := [ \pi(\ell_y) + 1, \pi(y)]$: this is an interval that contains $d_y$ integers and per construction of $\ell_y$, these intervals are pairwise disjoint. 
    We now conceptually create a new set of points $Z''$ from $Z$. 
    Consider simultaneously picking for every $y \in C^*$ an arbitrary point $z_y \in I_y$ and translating $y$ to $z_y$ (translating all points in $[z_y, \pi(y) - 1]$ one place to the right). Since $\ell_y$ is the maximum of all incoming edges in $P'$, the result is an embedded point set $Z''$ whose left-to-right order is a linear extension $L''$ of $P'$. 
     Since all intervals are disjoint, the number of linear extensions $L''$ that we can create this way is $\prod\limits_{y \in C^*} |d_y|$. Thus, $\prod\limits_{y \in C^*} |d_y| \leq e(P') \leq e(P)$ and $\sum\limits_{y \in C^*} \log d_y \in O(\log e(P))$.
\end{proof}

\paragraph{(3): The second for-loop.}
We upper bound the running time of the second for-loop by proving \cref{lemm:antiext}. We show a new result involving the entropy of undirected graphs.
One novelty of this approach is that it avoids dealing with the transitivity condition of partial orders.

\begin{lemma} \label{lemm:antientropy}
Let $P$ be a partial order that contains a set $A$ of $m$ disjoint antichains of size at least $k$, then:
\[
n \cdot H(\ol P) \ge m k \log k.
\]

\end{lemma}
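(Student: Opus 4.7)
The plan is to translate the antichain structure in $P$ to a clique structure in $\ol P$, and then exploit the convexity of $-\log$ directly in the definition of $H(\ol P)$. The argument avoids any use of partial order transitivity beyond the initial translation, which matches the authors' stated novelty.

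First I would observe that each antichain in $P$ is by definition a set of pairwise incomparable elements, and therefore forms a clique in the incomparability graph $\ol P$. Thus the hypothesis gives $m$ pairwise disjoint cliques $K_1, \dots, K_m$ in $\ol P$, each of size at least $k$. The key lemma will be a fractional clique constraint: for any $\lambda \in \Stab(\ol P)$ and any clique $K$ of $\ol P$, one has $\sum_{u \in K} \lambda(u) \le 1$. This follows because $\lambda$ is by definition a convex combination $\sum_S \mu_S \chi^S$ of characteristic vectors of independent sets $S$, and each such $S$ meets $K$ in at most one vertex, so $\sum_{u \in K} \chi^S(u) \le 1$ and the bound passes through the convex combination.

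Next, for each clique $K_i$, set $s_i := \sum_{u \in K_i} \lambda(u) \le 1$ and $k_i := |K_i| \ge k$. Jensen's inequality applied to the convex function $-\log$ with uniform weights $1/k_i$ yields
\[
\sum_{u \in K_i} -\log \lambda(u) \;\ge\; k_i \cdot \Bigl(-\log \tfrac{s_i}{k_i}\Bigr) \;=\; k_i \log \tfrac{k_i}{s_i} \;\ge\; k_i \log k_i \;\ge\; k \log k.
\]
Summing over the $m$ disjoint cliques, and using that $-\log \lambda(u) \ge 0$ for all remaining $u \notin \bigcup_i K_i$ (because every coordinate of $\lambda$ lies in $[0,1]$, as $\Stab(\ol P)$ is the convex hull of $0/1$ vectors), we obtain $\sum_{u \in V(\ol P)} -\log \lambda(u) \ge m k \log k$. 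Dividing by $n$ and taking the minimum over $\lambda \in \Stab(\ol P)$ then gives $n \cdot H(\ol P) \ge m k \log k$, which is the claim.

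The only delicate points are boundary issues: if $\lambda(u) = 0$ for some $u$ the sum is $+\infty$ and the inequality is trivial, so we may assume $\lambda > 0$ when applying Jensen; and the nonnegativity step in the previous paragraph relies on $\lambda(u) \le 1$, which as noted holds throughout $\Stab(\ol P)$. Beyond these formalities I do not foresee any serious obstacle; the heart of the argument is just the interaction between the fractional clique constraint and the convexity of $-\log$.
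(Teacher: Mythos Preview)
Your proposal is correct and follows essentially the same approach as the paper: both arguments use that the antichains are cliques in $\ol P$, invoke the fractional clique constraint $\sum_{u\in K}\lambda(u)\le 1$, apply convexity of $-\log$ on each clique, and sum. The only cosmetic difference is that the paper factors the argument through a spanning subgraph $G\subseteq\ol P$ consisting solely of the clique edges and then applies $H(G)\le H(\ol P)$, whereas you work directly in $\Stab(\ol P)$; the computations are otherwise identical.
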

\begin{proof}
    Let $G$ be a graph on $X$ whose edges form $m$ disjoint cliques $\set{A_i}_{i=1}^{m}$ of size at least $k$,
    corresponding to the antichains in $P$. A direct computation shows $n \cdot H(G) \ge m k \log k$.
    Indeed, if $\lambda \in \Stab(G)$, then $\sum_{x \in A_i} \lambda(x) \le 1$ as an independent set in $G$ contain at most one element from $A_i$.
    By convexity of $-\log(\dots)$,
    \[
        \sum_{x \in A_i} -\log(\lambda(x)) \ge - k \log\Big(\frac{\sum_{x \in A_i} \lambda(x)}{k}\Big) \ge k \log k,
    \]
    hence
    \[
        n \cdot H(G) = \min_{\lambda \in \Stab(G)} \sum_{x \in X} -\log(\lambda(x)) \ge \min_{\lambda \in \Stab(G)} \sum_{i=1}^{m} \sum_{x \in A_i} -\log(\lambda(x)) \ge \sum_{i=1}^{m} k \log k
    \]
    \noindent
    Finally, $H(G) \le H(\ol P)$ as $G$ is a spanning subgraph of $\ol P$.
\end{proof}

\antiext*

\begin{proof}
    The proof is an immediate application of \cref{lemm:eentropy,lemm:antientropy}.
\end{proof}

\begin{lemma}
    \label{lemm:second_loop}
    The second for-loop of \cref{algo:query} performs $O(c \cdot \log e(P))$ linear oracle queries. 
\end{lemma}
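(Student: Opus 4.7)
The plan is to bound the cost of the second for-loop by $O(|A|\log n)$ linear oracle queries and then convert this into $O(c\log e(P))$ using the choice $w = \tfrac{1}{2} n^{1/(3c)}$ together with \cref{lemm:antiext}. First, each iteration of the loop inserts a single element $y \in A$ into $T_0$ by one binary search in a tree of size at most $n$; this uses $O(\log n)$ linear oracle queries, so the whole loop uses $O(|A|\log n)$ linear oracle queries. It therefore suffices to establish $|A|\log n = O(c \log e(P))$.

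The next step is to trade the $\log n$ factor for a $\log w$ factor. From $w = \tfrac{1}{2} n^{1/(3c)}$ one reads off $\log w = \tfrac{1}{3c}\log n - 1$, so for $n$ larger than a constant depending on $c$ we have $\log n \le 6 c \log w$ (the remaining small cases are of constant size and can be handled trivially). Hence $|A|\log n \le 6c \cdot |A|\log w$, and the task reduces to proving $|A|\log w = O(\log e(P))$.

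The hard part is that \cref{lemm:antiext} is stated in terms of the \emph{number} of antichains, yielding only $\log e(P) = \Omega(mw\log w)$, whereas $|A| = \sum_{i=1}^m |A_i|$ may greatly exceed $mw$ if some antichains are much larger than $w+1$. My plan to sidestep this is to subdivide the family $A$ before applying \cref{lemm:antiext}. Every subset of an antichain is itself an antichain and each $A_i$ has size at least $w+1$, so I would partition each $A_i$ into $\lfloor |A_i|/(w+1)\rfloor \ge |A_i|/(2(w+1))$ disjoint sub-antichains of size exactly $w+1$. Summing over $i$ produces $m' \ge |A|/(2(w+1))$ disjoint antichains of size $w+1$, and \cref{lemm:antiext} applied to this refined family gives $\log e(P) = \Omega\bigl(m'(w+1)\log(w+1)\bigr) = \Omega(|A|\log w)$. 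Combining with the previous paragraph then yields $|A|\log n = O(c \log e(P))$, which is the claim.
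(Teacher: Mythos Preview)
Your argument is correct and follows the same outline as the paper: bound the loop by $O(|A|\log n)$, convert $\log n$ to $O(c\log w)$ via the choice of $w$, and bound $|A|\log w$ by $O(\log e(P))$ through \cref{lemm:antiext}.

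The one place you diverge is the ``hard part''. The paper simply applies \cref{lemm:antiext} with $m=|A|/(w+1)$, which is justified because the antichain-extraction procedure (\cref{lemm:antichainex}, \cref{algo:antiextract}) always outputs antichains of size \emph{exactly} $w+1$; so $|A|=m(w+1)$ on the nose and no subdivision is needed. You instead take Definition~\ref{def:decomp} at face value (``size at least $w+1$'') and add a subdivision step to cover the possibility of large $A_i$. That step is sound (a subset of an antichain is an antichain, and $\lfloor |A_i|/(w+1)\rfloor\ge |A_i|/(2(w+1))$ when $|A_i|\ge w+1$), and it makes your proof robust to the exact sizes of the $A_i$, at the cost of a couple of extra lines the paper doesn't need.
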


\begin{proof}

Inserting a single $y \in A$ into $C^*$ with binary search uses $O(\log n)$ queries and the for-loop of \cref{algo:query} thus performs at most $O(|A| \log n)$ queries. 
    We apply \cref{lemm:antiext} with $m = |A| / (w+1)$ to note that  $\log e(P) \in \Omega(|A| \log w)$.
Note that $w = \frac12 n^{1+1/(3c)}$ and thus $\log_w(n) \in O(c)$.
We now observe that the total number of queries in the second for loop of \cref{algo:query} is
\[
O(|A| \log n) = O(\log_w(n) \cdot |A| \log(w)) = O(c \cdot \log e(P)). \qedhere
\]
\end{proof}

\noindent
By combining Claim~\ref{clai:runningtime} with~\cref{lemm:sortingC,lemm:first_loop,lemm:second_loop}, we obtain Theorem~\ref{theo:upperbound}.

% BUILD_COMMAND make main.pdf -B
\section{Lower Bound}
\label{sec:lowerbound}

We fix any decision tree $T$ in the preprocessing phase of height at most $\frac{1}{48}n^{1+1/c}$ and any set of corresponding query algorithms $\{ T_Q \}$. 
Our strategy is to construct a family $\{ (P_i, L_i) \}_{i \in \cI}$ of pairs of a partial order with a corresponding linear extension 
with $\log e(P_i) = O(\frac{1}{c} n \log n)$. We will use this set to show a lower bound through the following concept:

\begin{definition}
    For any pair in $\{ (P_i, L_i) \}_{i \in \cI}$, if $O_{P_i}$ is the input of $T$ then the execution of $T$ ends in a leaf that stores a partial order which we denote by $Q_i$. We denote:
    \[
    E_i := \{ \textnormal{Linear orders } L_j \mid j \textnormal{ satisfies }Q_j = Q_i \}.
    \]
\end{definition}

We show that for any fixed preprocessing decision tree $T$ of height at most $\frac{1}{48}n^{1 + 1/c}$, for at least half of the pairs in $\{ (P_i, L_i) \}_{i \in \cI}$, $\log |E_i| \in \Omega(n \log n) = \Omega(c \cdot \log e(P_i))$. 
To show this, we show the following proposition:

\begin{proposition} \label{prop:lowertech}
    For every $c \ge 3$ and $n \ge 10$, there exists a family $\set{(P_i, L_i)}_{i \in \cI}$
    of partial orders $P_i$ with a linear extension $L_i$ with the following properties:
    \begin{enumerate}[1), noitemsep]
        \item $L_i$ is a linear extension of $P_i$, but not of $P_j$ for $j \ne i$.
        \item $\log e(P_i) \le \frac{1}{c} n \log n$
        \item If $T$ is a decision tree that performs $\frac{1}{48} n^{1 + 1/c}$ partial oracle queries
        during preprocessing, then for at least half the $i \in \cI$, we have $Q_i = Q_j$ for $m_i \ge 2^{\frac{1}{72} n \log n}$ distinct $j$. 
    \end{enumerate}
\end{proposition}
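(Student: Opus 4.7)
The plan is to prove Proposition~\ref{prop:lowertech} by exhibiting an explicit block-structured family of pairs $(P_i, L_i)$ and then applying a pigeonhole argument to the leaves of any preprocessing decision tree.

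For the construction, I would fix a block size $k \approx n^{1/c}$ and partition $[n]$ into $m = \lfloor n/k \rfloor$ consecutive blocks $B_1 < B_2 < \cdots < B_m$ with the total order $B_j < B_{j+1}$ imposed between blocks. Within each block independently pick an intra-block partial order $p_{i,j}$ from a fixed family $\mathcal{P}_k$ of posets on $k$ elements, and fix for every $p \in \mathcal{P}_k$ a designated linear extension $\ell(p)$. The index $i$ ranges over tuples in $\mathcal{P}_k^m$; $P_i$ is the block-sum poset and $L_i$ is the concatenation of the $\ell(p_{i,j})$ across blocks, which is automatically a linear extension of $P_i$ by the cross-block order. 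A clean baseline choice is $\mathcal{P}_k = S_k$ (the $k!$ linear orders on $[k]$) with $\ell(p) = p$, in which case each $P_i$ is itself a linear order on $[n]$.

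Condition (2) follows immediately from the block structure: $\log e(P_i) = \sum_{j=1}^{m} \log e(p_{i,j}) \leq m \log k! \leq n \log k \leq \tfrac{1}{c} n \log n$. Condition (1) holds provided the pair $(\mathcal{P}_k, \ell)$ has the distinguishing property that for every $p \neq q$ in $\mathcal{P}_k$, the designated extension $\ell(p)$ violates at least one comparability of $q$: then for $i \neq j$ at the first block $j_0$ where they differ, $L_i$ orders the elements of $B_{j_0}$ according to $\ell(p_{i,j_0})$ which conflicts with $p_{j,j_0} \subseteq P_j$, so $L_i$ is not a linear extension of $P_j$. For $\mathcal{P}_k = S_k$ this is automatic because distinct linear orders on $k$ elements always disagree on some pair.

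Condition (3) is proved by pigeonhole. Any decision tree $T$ of height $H = \tfrac{1}{48} n^{1+1/c}$ has at most $2^H$ leaves, so the map $i \mapsto Q_i$ partitions $\mathcal{I}$ into at most $2^H$ equivalence classes. Setting $M = 2^{\tfrac{1}{72} n \log n}$, the total number of indices sitting in equivalence classes of size less than $M$ is bounded above by $2^H \cdot M$. Provided $|\mathcal{I}| \geq 2 \cdot 2^H \cdot M$, at most half the indices lie in such ``small'' classes, so at least half of them satisfy $m_i \geq M$, which is exactly Condition (3).

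The main obstacle is calibrating the parameters so that the family clears the pigeonhole threshold $2^{H+1} \cdot M$ while Conditions (1) and (2) remain in force. With $\mathcal{P}_k = S_k$ one has $\log |\mathcal{I}| = m \log k! \approx \tfrac{1}{c} n \log n$, and the inequality $\log |\mathcal{I}| \geq \tfrac{1}{48} n^{1+1/c} + \tfrac{1}{72} n \log n + 1$ reduces (via Stirling on $\log(n!)$ when one allows the richer family of all linear orders on $[n]$) to $n^{1/c} \lesssim 48 \cdot \tfrac{71}{72} \log n$, and the constants $\tfrac{1}{48}$ and $\tfrac{1}{72}$ are chosen precisely so this has the correct margin. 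Enriching $\mathcal{P}_k$ beyond $S_k$ -- for instance to acyclic orientations of a carefully designed sparse graph on $[k]$ admitting many such orientations but a consistent distinguishing extension -- is what allows the argument to be pushed across the regime $c \geq 3$. The delicate step is simultaneously inflating $|\mathcal{I}|$, controlling $\log e(P_i)$ through block-sum additivity, and preserving the distinguishing property required for Condition (1); tracking these interacting constraints (and the $O(n)$ lower-order terms) through the pigeonhole inequality is the technical heart of the proof.
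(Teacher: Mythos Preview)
Your pigeonhole argument on the leaves of $T$ cannot work, and no choice of the block family $\mathcal{P}_k$ will rescue it. Property~(1) forces all the $L_i$ to be pairwise distinct linear orders on $[n]$, so $|\mathcal{I}| \le n!$ and hence $\log|\mathcal{I}| < n \log n$. But the tree height is $H = \tfrac{1}{48}\,n^{1+1/c}$, and for every fixed $c$ and sufficiently large $n$ one has $n^{1+1/c} \gg n\log n$. Thus $2^H$ already dwarfs $|\mathcal{I}|$, and your inequality $|\mathcal{I}| \ge 2\cdot 2^H \cdot M$ is unattainable. The claim that the inequality ``reduces to $n^{1/c}\lesssim 48\cdot\tfrac{71}{72}\log n$'' is precisely the place where the argument collapses: that inequality is false for large $n$, not a calibration issue.

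What the paper does instead is exploit the \emph{content} of the answers, not merely the number of leaves. The family is built so that $P_i$ is a disjoint union of $w=n^{1/c}$ chains; half the elements (the ``Leftovers'') are each placed uniformly in one of the $w$ chains. A partial-oracle query involving a Leftover $x$ succeeds only when the other element happens to lie on the same chain as $x$, an event of probability $1/w$ once the rest of $P_i$ is conditioned on. With at most $\tfrac{1}{48}\,n w$ queries total, a counting/Markov argument shows that for at least half the $P_i$ there are $\Omega(n)$ Leftovers that were queried at most $w/4$ times and on which every query failed (``hidden'' elements). Each hidden element can be relocated to any of the $h-1\approx n/(2w)$ slots on its own chain without contradicting any recorded answer, and these relocations are independent, giving $\ge (h-1)^{\Omega(n)}=2^{\Omega(n\log n)}$ distinct $P_j$ landing at the same leaf $Q_i$. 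The crucial difference from your approach is that most of the $2^H$ potential leaves of $T$ are simply never reached by any $P_i$, because ``no'' answers dominate and carry almost no information about which $P_i$ is the input; counting leaves throws this away.
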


\begin{lemma}
    \label{lemm:implies}
    Proposition~\ref{prop:lowertech} implies \cref{theo:lowerbound_full}.
\end{lemma}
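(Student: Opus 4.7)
The plan is to convert any hypothetical algorithm into a pair of decision trees and then combine the three properties in Proposition~\ref{prop:lowertech} into a direct counting inequality.

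I would first translate an algorithm meeting the hypothesized bounds into a preprocessing decision tree $T$ of height at most $\frac{1}{48} n^{1+1/c}$, and for each realizable leaf output $Q$ a query tree $T_Q$ of height at most $\frac{1}{73} c \log e(P_i)$ for every $P_i$ routing to that leaf (as set up in the preliminaries). Invoking Proposition~\ref{prop:lowertech}(2), $\log e(P_i) \le n \log n / c$ uniformly, so every $T_Q$ used on inputs from the family has depth at most $\frac{n \log n}{73}$ and therefore at most $2^{n \log n / 73}$ leaves.

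Next I would invoke Proposition~\ref{prop:lowertech}(3) on this specific $T$: at least half of the indices $i \in \mathcal{I}$ lie in preprocessing equivalence classes of size $\ge 2^{n \log n / 72}$. Call such indices \emph{confused}. By Proposition~\ref{prop:lowertech}(1) distinct indices carry distinct linear extensions, and each leaf of $T_Q$ outputs a single linear order, so within a single confused equivalence class the algorithm recovers at most $2^{n \log n / 73}$ of the $L_j$'s. The per-class success rate is therefore at most
\[
2^{n \log n / 73 - n \log n / 72} \;=\; 2^{-n \log n / (72 \cdot 73)},
\]
which tends to $0$ as $n$ grows.

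Summing across confused equivalence classes bounds the number of correctly handled confused pairs by $|\mathcal{I}| \cdot 2^{-n \log n / 5256}$. Combined with the unconfused indices (generously all counted as successes), for $n$ sufficiently large the total number of correctly handled pairs falls below the threshold $|\mathcal{F}|/4$ demanded by the theorem, giving the desired contradiction. The main obstacle is precisely this final bookkeeping: one has to verify that the geometric decay in the per-class success rate, combined with Proposition~\ref{prop:lowertech}(3), pushes the total success count strictly below $|\mathcal{F}|/4$ rather than merely below $|\mathcal{F}|/2$, which may require slightly tightening the accounting of the unconfused contribution beyond the bare ``at least half are confused'' guarantee.
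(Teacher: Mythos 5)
Your argument is essentially the paper's argument --- both reduce to counting shallow leaves of the query decision tree $T_{Q_i}$ against the size of the preprocessing equivalence class $E_i$ --- but your final accounting is done differently and is in fact quantitatively tighter. The paper observes that at most $2^{\lfloor\log m_i\rfloor-1}\le m_i/2$ leaves of $T_{Q_i}$ have depth below $\lfloor\log m_i\rfloor\ge\frac{1}{72}n\log n$, so at least half of each large class fails, giving failure on at least a quarter of all indices. You instead cap the number of leaves reachable within the query budget by $2^{n\log n/73}$ (via Condition (2)) and compare against the class size $2^{n\log n/72}$ (via Condition (3)), so the per-class success fraction is $2^{-n\log n(1/72-1/73)}=o(1)$ and nearly \emph{all} of each large class fails; this yields failure on $\frac{1}{2}-o(1)$ of the family rather than $\frac{1}{4}$. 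Two remarks. First, a small imprecision: $T_{Q_i}$ need not have depth at most $\frac{1}{73}n\log n$; rather, only leaves at that depth can correspond to recoveries within budget, and each such leaf can be correct for at most one $L_j$ (two distinct linear extensions reaching the same leaf cannot both equal the Hamiltonian path output there) --- your subsequent counting uses exactly this, so the argument stands. Second, the obstacle you flag at the end --- that the bound on successes is roughly $|\mathcal{F}|/2$ rather than $|\mathcal{F}|/4$ --- is real but is not a defect of your route specifically: the paper's own proof likewise only establishes that at least a quarter of the $j\in\cI$ exceed the query budget (i.e.\ at most three quarters succeed), so the ``more than a quarter'' phrasing in \cref{theo:lowerbound_full} is not literally matched by either argument; read as ``the algorithm fails on a constant fraction of $\mathcal{F}$,'' both proofs succeed, and yours gives the better constant.
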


\begin{proof}

    By Condition (1), $|E_i|  = m_i$.
    Indeed if $Q_j = Q_i$ then per definition $L_j \in E_i$ and all $L_j$ are distinct.
    For each $i \in \mathcal{I}$, even if the set $\{ (P_i, L_i) \}_{i \in \cI}$ is known \emph{a priori}
    (and thus the leaves of $T_{Q_i}$ contain only Hamiltonian paths corresponding to $L_j \in E_i$), the tree $T_{Q_i}$ contains
    at least $m_i$ leaves. Moreover, at most $2^{\lfloor\log m_i\rfloor - 1} \le m_i / 2$ of these leaves have a depth $<\lfloor\log m_i\rfloor$. We may now apply Condition (3) to note that for half of the $i \in \cI$ we have $m_i \ge 2^{\frac{1}{72} n \log n}$, and in particular, $\lfloor\log m_i\rfloor \ge \frac{1}{72} n \log n$.
    It follows that for at least half of the $j$ with $L_j \in E_i$, the leaf in $T_{Q_i} = T_{Q_j}$ containing $L_j$ has depth at least $\frac{1}{72} n \log n$. Therefore, for at least a quarter of the $j \in \cI$, the query algorithm performs at least $\frac{1}{72} n \log n$ queries before outputting $L_j$.
    By Condition (2),
    $\frac{1}{72} n \log n \ge \frac{1}{72} \cdot c \log e(P_i)$. 
\end{proof}

\paragraph{The Family} Given $c$ and $n$,
let $w = n^{1/c}$ and $h = n / (2w)$. For simplicity, assume that $w$ and $h$ are both integers.
We first give our intuitive definition (see also Figure~\ref{fig:lowerbound}). 
We call the first $n/2$ elements the \emph{Essentials} and use them to build  $w$ chains of length $h$. 
Each $P_i$ can then be generated by arbitrarily placing the remaining $n/2$ elements (\emph{Leftovers}) between any two Essentials.
All Leftovers $x_\ell$ between any two 
consecutive Essentials $(x_{j + kw}, x_{j + (k+1)w})$ are linearly ordered by $\ell$.
Formally, we define $\{(P_i, L_i) \}$ to consist of all partial orders $P_i$ with the following properties 
\begin{enumerate}[a), noitemsep]
    \item $P_i$ is the disjoint union of $w$ chains.
    \item (\emph{Essentials}) for each $j \in [w]$, the elements $\{ x_j, x_{j+w}, \dots, x_{j + (h-1) w} \}$ appear on the $j$-th chain in that order.  $x_j$ is the minimum and $x_{j+(h-1) w}$ is the maximum element of the $j$-th chain (this creates $w$ chains where each chain has $h$ elements from the first $n/2$ elements of $X$). 
    \item (\emph{Leftovers}) for each $\ell > n/2$ there exists a unique pair $(j, k) \in [w] \times [0, h-2]$ with $x_{j+k w} \prec_{P_i} x_\ell \prec_{P_i} x_{j+(k+1)w}$ . All $x_\ell$ with $x_{j+k w} \prec_{P_i} x_\ell \prec_{P_i} x_{j+(k+1)w}$ are linearly ordered by $\ell$.
    \item We define $L_i$ as the concatenation of the $w$ chains (connecting $x_{j + (h-1) w}$ to $x_{j+1}$). 
\end{enumerate}

\begin{figure}
    \centering
    \includegraphics{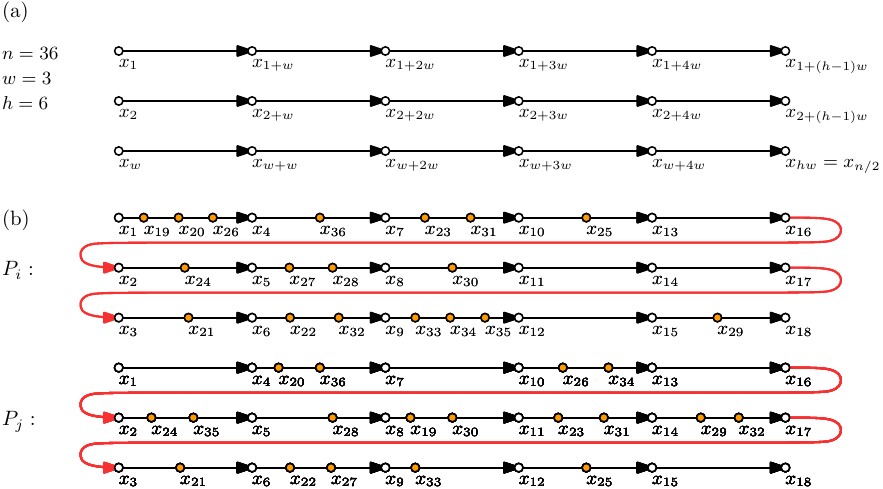}
    \caption{Our family of pairs of partial orders with linear extensions $\{ (P_i, L_i) \}$ can be constructed in two stages. (a): Step 1: partition the vertices $x_i$ for $i \leq n/2$ into $w$ equal-size chains. 
(b) For all $\ell  > n/2$, arbitrarily assign $\ell$ to lie between any two connected vertices of the previous construction. All $x_\ell$ that get assigned the same pair $(x_{j+kw}, x_{j + (k+1)w})$ are linearly ordered by $\ell$.  The corresponding linear order $L_i$ is obtained by adding the red edges.  }
    \label{fig:lowerbound}
\end{figure}

\begin{lemma}
    The family $\{ (P_i, L_i) \}_{i \in \mathcal{I}}$ has Properties 1) and 2) of Proposition~\ref{prop:lowertech}.
\end{lemma}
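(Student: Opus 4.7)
The plan is to handle Property 2) first, since it is purely a counting argument, and then to argue Property 1) via a rigidity observation about the structure of $L_i$.

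For Property 2), I would use property (a): each $P_i$ is a disjoint union of $w$ chains of some lengths $n_1, \ldots, n_w$ summing to $n$. The number of linear extensions of a disjoint union of chains of these lengths is the multinomial coefficient $\binom{n}{n_1, \ldots, n_w}$, which is bounded by $w^n$ via the multinomial identity $\sum_{n_1 + \cdots + n_w = n} \binom{n}{n_1, \ldots, n_w} = w^n$. Substituting $w = n^{1/c}$ yields $\log e(P_i) \le n \log w = \frac{1}{c} n \log n$, which is the desired bound.

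For Property 1), the forward direction that $L_i$ extends $P_i$ follows directly from (d): within each $P_i$-chain, $L_i$ follows the chain's order, and $P_i$ has no cross-chain relations by (a). For the reverse, suppose $L_i$ is also a linear extension of some $P_j$; I would show that this forces $P_j = P_i$, hence $j = i$. The key observation is that in $L_i$ the Essentials appear in $w$ contiguous blocks, one per $P_i$-chain, and within each block the Essentials $x_a, x_{a+w}, \ldots, x_{a + (h-1)w}$ appear in exactly that order. Now fix any Leftover $x_\ell$ and let $(a', b')$ denote its slot in $P_j$, i.e.\ $x_{a' + b'w} \prec_{P_j} x_\ell \prec_{P_j} x_{a' + (b'+1)w}$. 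Since $L_i$ extends $P_j$, $x_\ell$ must lie strictly between these two Essentials in $L_i$; but these two Essentials are consecutive within the $P_i$-chain-$a'$ block of $L_i$, which forces $x_\ell$ into exactly the slot $(a', b')$ of $P_i$ as well. Hence the Leftover chain-slot assignments in $P_i$ and $P_j$ coincide, and since within each slot both posets order Leftovers by index $\ell$, we conclude $P_i = P_j$.

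I expect the (mild) obstacle to be the reverse direction of Property 1). The argument hinges on the rigid block structure of $L_i$: Essentials are never interleaved across chains in $L_i$, so every Leftover's slot is uniquely recoverable from its position in $L_i$ together with the fixed Essential layout. Without this rigidity one could imagine Leftovers being reassigned to different slots in $P_j$. The Property 2) bound is otherwise just a direct appeal to the multinomial identity.
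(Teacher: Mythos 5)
Your proposal is correct, and both halves take a somewhat different route from the paper. For Property 2), the paper does not count extensions directly: it plugs the constant weighting $\lambda \equiv 1/w$ into the definition of graph entropy to get $H(\ol{P_i}) \le \log w$ and then invokes Lemma~\ref{lemm:eentropy} to conclude $\log e(P_i) \le n H(\ol{P_i}) = \frac{1}{c} n \log n$. Your multinomial argument ($e(P_i) = \binom{n}{n_1,\dots,n_w} \le w^n$ since the extensions of a disjoint union of chains are exactly the interleavings) is more elementary and self-contained, and it lands on the identical bound; the paper's route has the mild advantage of reusing machinery already set up for the upper bound. For Property 1), the paper gives a shorter argument: it observes that $P_i$ can be uniquely recovered from the sequence $L_i$ by cutting at the heads $x_1,\dots,x_w$, and concludes that the $L_i$ are pairwise distinct --- which is all that Lemma~\ref{lemm:implies} actually uses. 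Your rigidity argument proves the literally stated, stronger claim that $L_i$ is not even a linear extension of any $P_j$ with $j \ne i$, and it is correct: since all $P_j$ in the family share the same Essential skeleton, and the elements of $L_i$ lying strictly between two consecutive Essentials of a chain are precisely the Leftovers that $P_i$ assigns to that slot, any $P_j$ extended by $L_i$ must make the same slot assignments and hence equal $P_i$. So your proof is, if anything, more faithful to the proposition as stated.
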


\begin{proof}
    First we show Property 1). 
    Given the linear extension $L_i$ of a partial order $P_i$, we can uniquely recover $P_i$ as follows: 
    split the chain $L$ into $w$ chains at  $x_1, x_2, \dots, x_w$ (making these the head of each chain).
    This uniquely recovers the $w$ chains that are $P_i$ and thus for all $i, j$: $L_i \neq L_j$. 

    To show Property 2), we note that by construction any $P_i$ is the disjoint union of $w$ chains.
    We use the definition of graph entropy, setting $\lambda \equiv 1/w$ to note that
    $H(\ol{P_i}) \le \log w = \frac{1}{c} \log n$.
    By \cref{lemm:eentropy}, $\log e(P_i) \le n H(\ol{P_i}) = \frac{1}{c} n \log n$.
\end{proof}

\paragraph{The third property} 
What remains is to show that $\{ (P_i, L_i) \}_{i \in \mathcal{I}}$ satisfies Property 3). 
Recall that $T$ is our decision tree corresponding to the preprocessing phase.
Suppose that it performed at most  $\frac{1}{48} n^{1 + 1/c}$ partial oracle queries
during preprocessing. We say a partial oracle query ``is $x_j \prec_P x_i$?''
\emph{succeeds} if the oracle returns ``yes''.

\begin{definition}
    For any decision tree $T$ of the preprocessing phase and partial order $P_i$, we define three conditions after running $T$ with $O_{P_i}$ as input:
    \begin{itemize}[noitemsep, nolistsep]
        \item An elements $x_j$ is \emph{interesting}
if $j > n/2$,
\item  an interesting element is \emph{weak}
if at most $w/4$ queries are performed on it, and
\item an interesting element is \emph{hidden} if no queries involving it succeed. 
    \end{itemize}
\end{definition}

Consider the probability space defined by taking a uniformly random $P_i$
and running $T$ with $O_{P_i}$ as input.
We show that our set of partial orders has Property 3) through showing four claims:

\begin{claim} \label{clai:manyweak}
For any $P_i$ where $O_{P_i}$ is the input of $T$, there are always at least $n/3$ weak elements.
\end{claim}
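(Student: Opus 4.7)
The plan is to prove this claim via a straightforward double-counting argument on (query, interesting element) incidences. There are exactly $n/2$ interesting elements (those with index $>n/2$), and a weak interesting element is one involved in at most $w/4$ partial oracle queries. I will show that the number of \emph{non-weak} interesting elements is strictly less than $n/6$, so that the number of weak ones must exceed $n/2 - n/6 = n/3$.

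First I would set $Q \leq \frac{1}{48} n^{1+1/c}$ to denote the total number of partial oracle queries performed by $T$ during preprocessing. For each interesting element $x_j$ let $q(x_j)$ be the number of queries in $T$'s execution that involve $x_j$. Since every partial oracle query is of the form ``$x_a \prec_P x_b$?'' and thus involves at most two elements, each query contributes at most $2$ to the sum $\sum_{x_j \text{ interesting}} q(x_j)$. Hence
\[
\sum_{x_j \text{ interesting}} q(x_j) \;\leq\; 2Q \;\leq\; \tfrac{1}{24} n^{1+1/c} \;=\; \tfrac{nw}{24},
\]
where we used $w = n^{1/c}$ in the last step.

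Now suppose there are $k$ non-weak interesting elements. By definition each contributes strictly more than $w/4$ to the sum on the left, so $k \cdot (w/4) < nw/24$, which rearranges to $k < n/6$. Therefore the number of weak interesting elements is at least $n/2 - k > n/2 - n/6 = n/3$, which is the desired bound. No step here is delicate: the argument is a clean pigeonhole on the query budget, and the constants in the statement of the claim ($\tfrac{1}{48}$ in $Q$ and $w/4$ in the definition of weak) are precisely calibrated so that the arithmetic $2Q/(w/4) = n/6$ matches $n/2 - n/3$.
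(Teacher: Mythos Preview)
Your proof is correct and follows essentially the same double-counting argument as the paper: bound the total number of (query, interesting element) incidences by $2Q \le \frac{1}{24} n^{1+1/c} = \frac{nw}{24}$, divide by $w/4$ to bound the number of non-weak interesting elements by $n/6$, and subtract from the $n/2$ interesting elements to get at least $n/3$ weak ones. The paper's proof is just a terser version of exactly this computation.
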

\begin{proof}
Each of the at most $\frac{1}{48} n^{1+1/c}$ queries involve at most two interesting elements. Thus, there are at most $\frac{1}{24} n^{1+1/c} / (w/4) = \frac{1}{6} n$ non-weak interesting elements.
The remaining $n/2 - n/6$ interesting elements are weak.
\end{proof}
\begin{claim} \label{clai:weakhidden}
Let $P_i$ be uniformly at random. $\forall x_j$ with $j > \frac{n}{2}$: $\P{x_j \text{ is weak and not hidden}} \le \frac{1}{4}.$
\end{claim}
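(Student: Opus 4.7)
The plan is a deferred-decisions argument combined with a union bound. I first condition on the slots chosen for all Leftovers other than $x_j$. By the definition of the family, the remaining randomness --- the slot of $x_j$ --- is uniform over the $w(h-1)$ slots strictly between consecutive Essentials, and the decision tree $T$ becomes a deterministic function of that single slot.

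Next, I consider the \emph{all-fail} root-to-leaf path $\pi$ in $T$: the path obtained by answering ``no'' to every query involving $x_j$ (the answers to queries that do not involve $x_j$ are already fixed by the conditioning). Let $q_1, q_2, \dots$ be the queries involving $x_j$ encountered along $\pi$ in order, and let $W_t$ denote the set of slots for $x_j$ that would make $q_t$ succeed. The structural fact I will use is that $q_t$ compares $x_j$ to a single other element, which lies in some chain $c$, so only slots in chain $c$ can make $q_t$ succeed; hence $|W_t| \le h-1$. Moreover, the events ``$x_j$'s slot agrees with $\pi$ on $q_1, \dots, q_{t-1}$ and succeeds at $q_t$'' are pairwise disjoint across $t$ by construction.

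The key observation is that if $x_j$ is weak and not hidden, then its execution path must first agree with $\pi$ up to some index $t$ and then diverge by succeeding at $q_t$; and weakness forces $t \le w/4$, because otherwise the execution already contains more than $w/4$ queries involving $x_j$. Summing the disjoint deviation probabilities,
\[
\P{x_j \text{ weak and not hidden}} \le \sum_{t=1}^{w/4} \frac{|W_t|}{w(h-1)} \le \frac{(w/4)(h-1)}{w(h-1)} = \frac{1}{4}.
\]
Since this bound holds for every conditioning on the other Leftovers' slots, it holds unconditionally.

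The main subtlety is to justify $|W_t| \le h-1$ uniformly across query types. This requires a short case check: in a query ``$x_a \prec_P x_j$?'' (or the mirrored version), $x_a$ is either an Essential at a fixed position in some chain $c$, or a Leftover whose slot $(c,k)$ has been fixed by the conditioning. In either case, success forces $x_j$ to land in the specific chain $c$ and on a specific side of $x_a$ within that chain (with an index tiebreak when $x_a$ is a Leftover in the same slot), so $|W_t|$ is at most the number of slots in chain $c$, which is $h-1$.
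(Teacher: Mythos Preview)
Your proof is correct and follows essentially the same approach as the paper: condition on the placement of all Leftovers other than $x_j$, note that the remaining randomness (the position of $x_j$) is uniform, and then union-bound over the first $w/4$ queries involving $x_j$. The paper argues at the coarser granularity of the \emph{chain} of $x_j$ (uniform over $w$ chains, each query succeeds only if $x_j$ lies in one specific chain, so $w/4$ queries hit the right chain with probability $\le 1/4$), whereas you track the exact \emph{slot} and obtain the identical per-query bound $(h-1)/(w(h-1))=1/w$; your all-fail-path formalization makes the handling of $T$'s adaptivity explicit where the paper leaves it to the ball-in-bins analogy.
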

\begin{proof}
If we fix the partial order of all elements except $x_j$,
then $x_j$ is equally likely to be in any of the $w$ chains, as by Condition (d) in the definition of
$\set{(P_i, L_i)}_{i \in \mathcal{I}}$, there are exactly $h-1$ options to place $x_j$ on a particular chain (in between each pair of Essentials on the chain).
Each query involving $x_j$, and $x_k$, succeeds only if $x_j$ is on the same chain as $x_k$.
Thus
\[
\P{x_j \text{ is not hidden after the first $w/4$ queries involving $x_j$}} \le \frac{w/4}{w} = \frac14.
\]
Intuitively, this corresponds to trying to find a uniformly random ball among $w$ bins
by looking at $\le w/4$ bins.
The best you can do is look at $w/4$ distinct bins, then you find the ball with probability $1/4$.
The claim now follows as [$x_j$ is weak and not hidden]
implies [$x_j$ is not hidden after the first $w/4$ queries involving $x_j$].
\end{proof}

\begin{claim} \label{clai:ehidden}
Let $P_i$ be sampled uniformly at random. Let $H$ denote the number of hidden elements after running $T$ with $O_{P_i}$ as input. Then 
$\P{H \ge n/12} \ge 1/2$.
\end{claim}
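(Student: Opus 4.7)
The plan is to combine the deterministic lower bound of Claim~\ref{clai:manyweak} (every execution leaves at least $n/3$ weak elements) with the per-element probabilistic bound of Claim~\ref{clai:weakhidden} (each interesting $x_j$ is weak-but-not-hidden with probability at most $1/4$), and to bridge the two via Markov's inequality on the number of weak-but-not-hidden elements.

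First, I would introduce the count $W_{nh}$ of interesting elements that are weak but \emph{not} hidden, writing it as a sum of indicators $W_{nh} = \sum_{j > n/2} \mathbf{1}[x_j \text{ is weak and not hidden}]$. Since there are $n/2$ interesting elements, linearity of expectation together with Claim~\ref{clai:weakhidden} gives
\[
\E{W_{nh}} \le \frac{n}{2} \cdot \frac{1}{4} = \frac{n}{8}.
\]
Markov's inequality then yields $\P{W_{nh} \ge n/4} \le (n/8)/(n/4) = 1/2$, so with probability at least $1/2$ we have $W_{nh} < n/4$.

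On this event, let $W$ denote the total number of weak elements (a deterministic quantity satisfying $W \ge n/3$ by Claim~\ref{clai:manyweak}) and let $W_h = W - W_{nh}$ be the weak-and-hidden count. Then
\[
W_h \ge \frac{n}{3} - \frac{n}{4} = \frac{n}{12},
\]
and since every weak-and-hidden element is in particular hidden, $H \ge W_h \ge n/12$. Combining, $\P{H \ge n/12} \ge 1/2$ as required.

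There is no real obstacle beyond setting up the decomposition correctly: the deterministic lower bound on $W$ and the expectation bound on $W_{nh}$ are already established, and Markov closes the argument. The one point worth double-checking is that Claim~\ref{clai:weakhidden} is stated unconditionally for a uniformly random $P_i$, which is exactly what is needed to apply linearity of expectation to the sum over $j > n/2$ without any conditioning on the algorithm's trajectory.
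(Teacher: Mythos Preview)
Your proof is correct and follows essentially the same argument as the paper: bound the expectation of the weak-but-not-hidden count by $n/8$ via Claim~\ref{clai:weakhidden} and linearity, apply Markov to get $W_{nh} < n/4$ with probability at least $1/2$, and subtract from the always-valid lower bound $W \ge n/3$ of Claim~\ref{clai:manyweak}. One small wording nit: $W$ is not itself a deterministic quantity (it depends on the sampled $P_i$), but its lower bound $n/3$ holds for every $P_i$, which is all you need.
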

\begin{proof}
\cref{clai:weakhidden} shows that the expected number of weak non-hidden elements is $\le n/8$.
By Markov's inequality,
\[
\P{\text{\#weak non-hidden elements} \ge n/4} \le \frac12.
\]
By \cref{clai:manyweak}, there are $\ge n/3$ weak elements.
If at most $n/4$ of them are non-hidden, then at least $n/12$ of them are hidden.
\end{proof}

\begin{claim} \label{clai:hiddenoptions}
 Suppose that running $T$ with $O_{P_i}$ as input ends with $H$ hidden elements and at a leaf that stores $Q_i$.  
Then, there exist at least $(h-1)^{H}$ distinct $P_j$ compatible with the partial order $Q_i$. 
\end{claim}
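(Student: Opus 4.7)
The plan is to interpret ``compatible with $Q_i$'' as $Q_j = Q_i$: the decision tree $T$ executed on $O_{P_j}$ must follow the same root-to-leaf path as on $O_{P_i}$, which holds iff $P_j$ agrees with $P_i$ on every query asked along that path. We construct $(h-1)^H$ such $P_j$ by keeping every non-hidden element fixed to its position in $P_i$ and re-placing the $H$ hidden leftovers into new slots $(j,k) \in [w] \times [0, h-2]$ (as in the application in Lemma~\ref{lemm:implies}, we tacitly use that the $n/12$ hidden elements produced by the argument of \cref{clai:ehidden} are in fact also weak). Queries not involving a hidden element then automatically agree, and the task reduces to placing the hidden leftovers so that every query involving one of them still returns \emph{no}, as it did in $P_i$.

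Order the hidden leftovers arbitrarily as $x_{\ell_1}, \dots, x_{\ell_H}$ and place them sequentially. At step $i$, consider all queries on the path involving $x_{\ell_i}$ and an already-placed element --- either a fixed non-hidden element or an earlier hidden leftover $x_{\ell_{j}}$ with $j < i$. The key single-query observation is that a query $(x_{\ell_i}, x_a)$ with $x_a$ fixed at slot $(j_a, k_a)$ forbids at most $h-1$ slots for $x_{\ell_i}$: the ``no'' answer rules out either $\{(j_a, k) : k < k_a\}$ or $\{(j_a, k) : k \ge k_a\}$, always lying on the single chain $j_a$. Since $x_{\ell_i}$ is weak, at most $w/4$ queries involve it, so at most $(w/4)(h-1)$ of the $w(h-1)$ total slots are forbidden, leaving at least $(3w/4)(h-1) \ge h-1$ valid slots (using $w \ge 2$).

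By the product rule, this yields at least $(h-1)^H$ distinct slot assignments, each uniquely determining a distinct $P_j$ in the family that agrees with $P_i$ on every query along the root-to-leaf path, and therefore satisfies $Q_j = Q_i$. The main subtlety to address is how to handle queries between two hidden leftovers $x_{\ell_i}$ and $x_{\ell_{j}}$ with $j > i$: these impose no constraint at step $i$ and are instead enforced at step $j$, where $x_{\ell_i}$'s slot is already fixed and the single-query analysis above applies verbatim, so the $h-1$ bound per step is preserved.
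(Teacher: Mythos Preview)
Your argument is correct, but it takes a substantially different route from the paper's. The paper keeps each hidden leftover $x_k$ on its \emph{original} chain $\ell$ in $P_i$ and asserts that all $h-1$ slots on that chain remain consistent, on the grounds that hiddenness means ``there were no queries involving both $x_k$ and another element of the $\ell$-th chain.'' You instead allow $x_k$ to land in any of the $w(h-1)$ slots across all chains, and use weakness to bound the total number of slots forbidden by the at most $w/4$ queries by $(w/4)(h-1)$, leaving at least $(3w/4)(h-1)\ge h-1$ valid slots at each step.

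The paper's approach is shorter and nominally avoids the weakness hypothesis, but its first sentence is not literally true: hiddenness only rules out queries that would have returned ``yes''. A query in the \emph{other} direction between $x_k$ and a same-chain element $x_m$ (say ``is $x_m \prec_P x_k$?'' when in fact $x_k \prec_P x_m$) returns ``no'' and is perfectly compatible with $x_k$ being hidden, yet moving $x_k$ past $x_m$ within chain $\ell$ would flip that answer. Your approach sidesteps this issue entirely by absorbing every constraint into the global $(w/4)(h-1)$ budget rather than reasoning chain-by-chain. The price is that you prove a formally weaker statement (you need the $H$ hidden elements to also be weak), but, as you correctly observe, the $n/12$ hidden elements produced in the proof of \cref{clai:ehidden} are weak by construction, so the downstream use of the claim in \cref{lemm:implies} and the concluding argument goes through unchanged.
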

\begin{proof}
If $x_k$ is a hidden element in the $\ell$-th chain in $P_i$,
then there were no queries involving both $x_k$ and another element of the $\ell$-th chain.
Thus, any $P_j$ obtained from $P_i$ by moving $x_k$ to any of the $(h-1)$ possible positions
inside the $\ell$-th chain is consistent with the query results.
(one of these $P_j$ is $P_i$ itself.)
As $x_k$ remains in the same chain, this argument applies independently to each of the $H$
hidden elements, yielding $(h-1)^H$
distinct $P_j$ in total.
\end{proof}

\paragraph{Concluding our argument.}
We now show that $\set{(P_i, L_i)}_{i \in \mathcal{I}}$ satisfies Condition (3)
of \cref{prop:lowertech}. By \cref{clai:ehidden}, for at least half the $P_i$,
we have $H \ge n/24$. For those $P_i$, by \cref{clai:hiddenoptions},
the number of $P_j$ compatible with the query results is at least
$(h-1)^{n/24} = 2^{n \log(h-1)/24}$. Finally, $c \ge 3$ and $n \ge 10$ imply $\log(h-1) \ge \log(n) / 3$.
Thus, by Lemma~\ref{lemm:implies}, we may conclude: 

\lowerbound*

% use references.bib for hand-written references
% zotero.bib is generated by zotero and should not be edited by hand.

\appendix

% BUILD_COMMAND make main.pdf -B
\section{Appendix}
\label{app:poset}

\subsection{Decomposing a poset into chains and antichains} \label{sect:peelmodify}
Recall that a poset has width $\le w$ if and only if it can be decomposed into $w$ chains.
By Dilworth's theorem, this is equivalent to there being no antichain of size $w+1$.
We now explain how to modify the  merge sort-like algorithm for sorting
posets of width $\le w$ of (DKMRV) \cite{daskalakis_sorting_2009}
to work in general posets by finding and deleting antichains of size $w+1$.
The core of the (DKMRV) algorithm is the \textsc{Peeling} procedure,
which consists of $w$ peeling iterations:
\begin{theorem}[\textsc{PeelingIteration}, Theorem~3.5 in~\cite{daskalakis_sorting_2009}] \label{theo:peeling}
    Let $P$ be a poset of width $\le w$ on a set $X$ of size $n$.
    Given an oracle $O_P$ and a decomposition of $P$ into $w+1$ chains,
    there is an algorithm that returns a decomposition of $P$ into $w$ chains
    in $O(n w)$ time.
\end{theorem}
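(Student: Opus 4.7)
The plan is to follow the \textsc{ChainMerge}-based bipartite-matching approach (as in DKMRV), re-expressing the chain cover as a matching in an auxiliary graph and exploiting the given $(w+1)$-chain decomposition as a sparse scaffolding on which a single augmenting-path step can be carried out in $O(nw)$ time.

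First, I would build a \textsc{ChainMerge} structure: for every $x \in X$ and every input chain $C_j$, store the largest $p_j(x) \in C_j$ with $p_j(x) \prec_P x$, and correspondingly the smallest $q_j(x) \in C_j$ with $x \prec_P q_j(x)$ (with $\bot$ where undefined). This is produced by pairwise linear merges of the $w+1$ input chains: each pair $(C_i, C_j)$ contributes $O(|C_i|+|C_j|)$ queries and time using the partial oracle, and summing over all $\binom{w+1}{2}$ pairs yields $O(nw)$ queries and time in total. The resulting structure has size $O(nw)$ and supports $O(1)$ lookup of both $p_j(x)$ and $q_j(x)$ for any $x$ and $j$.

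Second, I would cast the chain cover as a bipartite matching. In the graph $G = (X_L \sqcup X_R, E)$ on two copies of $X$ with $(x_L, y_R) \in E$ iff $x \prec_P y$, a cover into $k$ chains corresponds to a matching of size $n - k$ (each chain of length $\ell$ contributes $\ell - 1$ matched edges obtained from its consecutive pairs). The input decomposition thus supplies a matching $M_0$ of size $n - w - 1$, while width $\le w$ combined with Dilworth's theorem guarantees a matching of size $n - w$; hence $M_0$ admits an augmenting path. I would find one by BFS from the unmatched vertices of $X_L$ (the tops of the input chains), restricting attention at each reached vertex $x_L$ to the $O(w)$ \textsc{ChainMerge} edges $\{(x_L, q_j(x)_R) : j \in [w+1]\}$. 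The BFS then visits $O(n)$ vertices and explores $O(nw)$ edges; augmenting along the located path and decoding the resulting matching of size $n - w$ into $w$ maximal alternating paths yields the desired chain decomposition in an additional $O(n)$ time.

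The main obstacle is establishing completeness of the restricted BFS: that an augmenting path through \textsc{ChainMerge} edges exists whenever one exists in $G$. The standard argument is a local shortcutting: if an augmenting path uses a ``skipping'' forward edge $(x_L, y_R)$ with $y \in C_j$ but $y \ne q_j(x)$, then $q_j(x)$ lies on $C_j$ strictly between $p_j^{-1}(y)$ and $y$, and rerouting the path through $q_j(x)_R$ and the matched edge to its chain-predecessor preserves both the alternation between matched and unmatched edges and the $\prec_P$-monotonicity of each unmatched edge. Iterating this local fix produces an augmenting path entirely on \textsc{ChainMerge} edges. Combining $O(nw)$ for \textsc{ChainMerge}, $O(nw)$ for BFS, and $O(n)$ for the augmentation and decoding yields the claimed $O(nw)$ bound.
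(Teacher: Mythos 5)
The paper does not actually prove this statement: it is imported verbatim as Theorem~3.5 of DKMRV~\cite{daskalakis_sorting_2009} and used as a black box inside \textsc{AntichainPeeling}, so there is no in-paper proof to compare against. Your reconstruction follows the same strategy as the original source --- a \textsc{ChainMerge} structure built by $O(nw)$ pairwise chain merges, the chain-cover/bipartite-matching correspondence, Dilworth's theorem to guarantee an augmenting path for $M_0$, and an alternating search restricted to the sparsified edge set --- and the skeleton is sound. Two details in your completeness argument need repair. First, when $x$ and $y$ lie on the \emph{same} input chain $C_j$, the candidate edge $(x_L, q_j(x)_R)$ is exactly the matched edge of $M_0$ (since $q_j(x)$ is then $x$'s chain successor), so it cannot serve as the unmatched step of an alternating path; the shortcut must instead target the element after $q_j(x)$ on $C_j$, i.e.\ the sparse edge set should record the two smallest elements of each chain above $x$, which does not affect the $O(nw)$ bound. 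Second, ``iterating the local fix'' rewrites a suffix of the path and can introduce repeated vertices, and it needs an explicit progress measure to terminate; it is cleaner to prove completeness of the restricted BFS directly by induction on alternating-path length: if $v_R$ with $v \in C_j$ is reachable in the full graph, then some $v' \in C_j$ with $v' \preceq_{C_j} v$ is reachable using only sparse edges (the inductive step uses that $t' \preceq_{C_i} t$ implies $\mathrm{pred}(t') \preceq_P \mathrm{pred}(t)$, so the rerouted left vertex still lies below $v$ in $P$), whence any exposed chain-bottom reachable in the full graph is itself reachable in the sparse graph. With these repairs your argument is correct and gives the stated $O(nw)$ bound.
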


To apply this peeling approach to posets of arbitrary width,
we introduce a new procedure that efficiently finds and deletes large antichains.

\begin{lemma}[\textsc{AntichainExtraction}]\label{lemm:antichainex}
    Let $P$ be a poset on a set $X$ of size $n$.
    Given an oracle $O_P$ and a decomposition of $P$ into $w+1$ chains,
    there is an algorithm that returns a decomposition of $P$ into $w+1$ chains
    and a \emph{maximal} set $A$ of antichains of size $w+1$,
    in $O(n w)$ time.
    (Note that the $w+1$ chains then form a decomposition of a poset of width $\le w$ by maximality of $A$.)
\end{lemma}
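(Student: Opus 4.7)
The plan is to adapt the \textsc{PeelingIteration} procedure of \cite{daskalakis_sorting_2009} (\cref{theo:peeling}) into a single scan that simultaneously attempts to reduce $w+1$ chains to $w$ chains and, whenever that reduction is obstructed, certifies the obstruction as an antichain of size $w+1$ which it then extracts. The key observation is that the only place where the original peeling uses the width-$\le w$ hypothesis is in guaranteeing that a certain merge step always succeeds; when the hypothesis fails, the very objects that witness the failure are precisely $w+1$ pairwise incomparable elements, one from each chain. So the algorithm should run the peeling until such a witness appears, extract the antichain, and then continue peeling on the surviving poset (which now has $w+1$ elements removed but still has $w+1$ chains covering it).

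More concretely, I would first reproduce the structural core of \textsc{PeelingIteration}: it processes the elements in a topological sweep, maintaining, for each currently unabsorbed element, a ``signature'' describing its relationship to the current top element of each of the $w+1$ chains, and it merges or reassigns chain tops whenever two signatures certify comparability. When a signature shows that the next element is incomparable with the current top of every chain, the original proof would contradict width $\le w$; in the modified version, this is exactly the trigger. At that point the current tops of the $w+1$ chains together with the new element form an antichain of size $w+1$ (in fact one can take any $w+1$ mutually incomparable representatives the signatures expose), which we add to $A$, delete from the chains, and then resume the sweep. Because each chain loses its top when it contributes to an extracted antichain, the remaining decomposition after extraction is still a decomposition of the surviving poset into at most $w+1$ chains, so the invariant is preserved. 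Maximality of $A$ follows because the scan terminates only when no further antichain of size $w+1$ can be formed out of chain tops, and a standard argument (as in the peeling analysis) shows that the absence of such a transversal antichain on chain tops implies the absence of any antichain of size $w+1$ in the residual poset.

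For the running time, I would argue that each element is touched at most $O(w)$ times throughout the entire scan: once as it is processed into its chain, and at most $O(w)$ times when its signature is updated or consulted, exactly mirroring the $O(nw)$ bound of \cref{theo:peeling}. Extraction itself costs $O(w)$ per antichain, and since the extracted antichains are pairwise disjoint there are at most $n/(w+1)$ extractions, contributing an additional $O(n)$ work. In particular the bound does not degrade by a factor equal to the number of extractions, which is the pitfall one must avoid.

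The main obstacle is aligning the modification with the internal invariants of DKMRV's peeling, specifically identifying the exact step where the width assumption is invoked and showing that the objects produced there are genuinely an antichain of size $w+1$ rather than merely a tuple that fails some merge-test. This requires a careful reading of the signature mechanism and a proof that after each extraction the remaining state is a valid input for the continuation of the peeling sweep, so that the amortized $O(nw)$ bound survives. Once this is verified the lemma follows directly, and by maximality the residual $w+1$ chains form a decomposition of a poset of width $\le w$, as required.
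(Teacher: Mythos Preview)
Your route is genuinely different from the paper's, and considerably more complicated. The paper does \emph{not} open up the internals of \textsc{PeelingIteration} at all. Instead it gives a self-contained greedy scan on chain maxima: since any antichain of size $w+1$ must pick exactly one element from each of the $w+1$ chains, one repeatedly looks at the $w+1$ current maxima; if they are pairwise incomparable, they form an antichain which is extracted; if some pair $x \prec_P y$ is comparable, then $y$ is comparable to every element of $x$'s chain and hence lies in no antichain of size $w+1$, so $y$ is deleted. Each pair of maxima is compared at most once, giving $O(nw)$ total. \textsc{AntichainExtraction} and \textsc{PeelingIteration} are then called \emph{alternately} as black boxes in \textsc{AntichainPeeling}.

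Your proposal, by contrast, tries to interleave extraction with the peeling sweep itself. In principle this could be made to work, but as written it is a plan rather than a proof, and it has concrete gaps. First, there is a counting slip: you write that ``the current tops of the $w+1$ chains together with the new element'' form an antichain of size $w+1$, but that is $w+2$ objects; you need to identify which $w+1$ of them are actually pairwise incomparable, and that depends on the precise invariant peeling maintains, which you have not stated. Second, your maximality argument (``absence of such a transversal antichain on chain tops implies the absence of any antichain of size $w+1$'') is exactly the nontrivial structural fact that the paper's approach proves cleanly via the observation about the larger of two comparable maxima; you assert it as ``a standard argument'' without supplying it. Third, you yourself flag the main obstacle --- verifying that after each extraction the peeling state is still valid and the amortized bound survives --- and leave it unresolved. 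The paper sidesteps all three issues by keeping extraction separate from peeling.
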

We defer the description of \textsc{AntichainExtraction} to later.
By alternating antichain extraction and peeling iterations $w$ times, we can turn a chain decomposition
of $P$ with $2w$ chains into one with $w$ chains and zero or more antichains of size $w+1$,
see \cref{algo:antichainpeel}.
This can be thought of as a generalization of the \textsc{Peeling} procedure of (DKMRV)~\cite{daskalakis_sorting_2009}.
By using this as a subroutine, we can then mimic merge sort, see \cref{algo:antimergesort}.
We can now prove:

\widthdecomp*
\begin{proof}
    We first analyze the running time of \cref{algo:antichainpeel}.
    A single antichain extraction or peeling iteration takes $O(n w)$ time.
    As we perform $w$ of these, this takes $O(n w^2)$ time in total.

    Now we analyze \cref{algo:antimergesort}.
    The merge sort recursion has depth $\log(n / w)$, and
    each recursion level takes $O(n w^2)$ time,
    hence the total running time is $O(n w^2 \log (n/w))$.
\end{proof}

\begin{algorithm}[h]
    \caption{\textsc{AntichainPeeling}\\
    \textbf{Input:} partial oracle $O_P$, chain decomposition $\set{C_i}_{i=1}^{2w}$ of $P$\\
    \textbf{Output:} a decomposition of $P$ into $w$ chains and zero or more antichains of size $w+1$
    }\label{algo:antichainpeel}
    \begin{algorithmic}[1]
        \State $A \gets \set{}$
        \For{$i=w, \dots, 1$}
            \State $(C_i', \dots, C_{i+w}'), A' \gets \textsc{AntichainExtraction}(O_P, (C_i, \dots, C_{i+w}))$ \Comment{\cref{lemm:antichainex}}
            \State $C_i'', \dots, C_{i+w-1}'' \gets \textsc{PeelingIteration}(O_P, (C_i', \dots, C_{i+w}'))$ \Comment{\cref{theo:peeling}}
            \State $A \gets A \cup A'$
            \State $C_i, \dots, C_{i+w-1} \gets C_i'', \dots, C_{i+w-1}''$
            \State $C_{i+w} \gets \set{}$
        \EndFor
        \State \Return $(C_1, \dots, C_w), A$
    \end{algorithmic}
\end{algorithm}

\begin{algorithm}[h]
    \caption{Chain-Antichain Merge Sort.\\
    \textbf{Input:} partial oracle $O_P$, set $X$\\
    \textbf{Output:} a decomposition of $P$ on $X$ into $w$ chains and zero or more antichains of size $w+1$
    }\label{algo:antimergesort}
    \begin{algorithmic}[1]
        \Procedure{ChainAntichainMergeSort}{$O_P, X$}
            \If{$|X| \le w$}
                \State $C \gets$ elements of $X$ as size-1 chains.
                \State \Return $C, \set{}$ \Comment{No antichains}
            \EndIf
            \State $X_1 \sqcup X_2 \gets X$ \Comment{Split $X$ into equally sized parts}
            \State $C', A' \gets$ \textsc{ChainAntichainMergeSort}($O_P, X_1$)
            \State $C'', A'' \gets$ \textsc{ChainAntichainMergeSort}($O_P, X_2$)
            \State $\triangleright$ Now $(C' \cup C'')$ is a decomposition of $X - A' - A''$ into $2w$ chains.
            \State $C''', A''' \gets$ \textsc{AntichainPeeling}$(O_P, C' \cup C'')$ \Comment{\cref{algo:antichainpeel}}
            \State \Return $C''', A' \cup A'' \cup A'''$
        \EndProcedure
    \end{algorithmic}
\end{algorithm}

\paragraph{Antichain Extraction.} Let us now show \cref{lemm:antichainex}. Since $P$ is decomposed into
$w+1$ chains, any antichain of size $w+1$ needs to contain exactly one element from each such chain.
If the maximum elements of all $w+1$ chains are pairwise incomparable, then they form an antichain.
Otherwise, there are two comparable maximum elements $x, y$ of different chains. Without loss of generality, let $x \prec_P y$,
then $x' \prec_P y'$ for any element $x'$ in the same chain as $x$, so $y$ cannot lie in an antichain
of size $w+1$. The resulting algorithm is described in \cref{algo:antiextract}.
A naive implementation which compares
all $O(w^2)$ pairs of maximum elements at each step takes $O(n w^2)$ time. This gets reduced to $O(n w)$
by comparing each such pair only once, which proves \cref{lemm:antichainex}.

\begin{algorithm}[h]
    \caption{Antichain Extraction (heavily inspired by Figure~3 of \cite{daskalakis_sorting_2009})\\
    \textbf{Input:} partial oracle $O_P$, chain decomposition $\set{C_i}_{i=1}^{w+1}$ of $P$\\
    \textbf{Output:} a decomposition of $P$ into $w+1$ chains and a \emph{maximal} set of antichain of size $w+1$
    }\label{algo:antiextract}
    \begin{algorithmic}[1]
        \For{$i=1, \dots, w+1$}
            \State $C_i' \gets$ copy of $C_i$
        \EndFor
        \State $A \gets \set{}$
        \While {all $C_i'$ are non-empty.}
            \If{there is a pair $(x, y)$ of maximum elements $x \in C_i'$, $y \in C_j'$, such that $x \prec_P y$}
                \State delete $y$ from $C_j'$
            \Else
                \State delete the maximum element from each $C_i'$
                \State add those $w+1$ elements as an antichain to $A$
            \EndIf
        \EndWhile
        \State delete the elements in $A$ from their respective $C_i$
        \State \Return $(C_1, \dots, C_{w+1}), A$
    \end{algorithmic}
\end{algorithm}

\subsection{RAM model vs Pointer Machine}
\label{app:pointer}
Our algorithms as presented run in the RAM model of computation. Here, we explain how to
match the bounds of \cref{theo:upperbound} on a pointer machine.

For the preprocessing phase, we use a standard simulation technique: Any RAM model algorithm with running time $O(T)$ may be simulated on a pointer machine in $O(T \log T)$ time by replacing arrays
with balanced search trees. Our analysis shows
that the preprocessing phase runs in $\tO(n^{1 + 2c/3})$ time in the RAM model,
hence the simulation on a pointer machine takes $\tO(n^{1+2c/3} \cdot \log n) \subseteq O(n^{1+1/c})$ time.

For the query phase, this simulation technique would only yield $O(\log e(P) \log \log e(P))$ running time.
To avoid the additional log factor, we instead note that our query phase runs on a pointer machine as is.
Indeed, the merge sort step in \cref{algo:query} can be implemented with linked lists,
and finger search trees work as-is on a pointer machine. Hence, our query phase runs in $O(\log e(P))$ on
a pointer machine too.

\subsection{A Quadratic Lower Bound for Finding Maximum Chains}
\label{app:chains}
In the introduction we state that computing an (exact) maximum chain
requires $\Omega(n^2)$ partial oracle queries. Let us now prove this.
As the first chain in a greedy chain decomposition is a maximum chain, such a lower bound
automatically applies to computing a greedy chain decomposition too.

If $P$ consists of two comparable elements, $x \prec_P y$, and all other pairs are incomparable,
then clearly any algorithm needs to spend $\Omega(n^2)$ partial oracle queries to find
the unique maximum chain $(x, y)$. However, one might argue that this example is a bit artificial
from a sorting perspective. Indeed, as $\log(e(P)) = \Theta(n \log n)$, any standard sorting algorithm
can sort $P$ in $O(\log e(P))$ linear oracle queries and time, so there is no need to compute a maximum chain.
To avoid this trivial case, we describe a more general family of partial orders.

\begin{lemma}
    For every $0 < k \le n-2$, there is a family of partial orders $P$ with a
    unique maximum chain of length $n-k$,
    % and $\log e(P) = \Theta(k \log k)$,
    such that any algorithm needs to spend $\Omega(k^2)$ time to find
    the maximum chain.
\end{lemma}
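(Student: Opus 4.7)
The plan is to generalize the introductory example (a single comparable pair among $n$ otherwise isolated elements, which already yields the bound for $k = n-2$) by attaching a fixed ``backbone'' chain. For each $0 < k \le n-2$, label the ground set as $c_1, \dots, c_{n-k-2}, z_1, \dots, z_{k+2}$; this uses exactly $n$ elements, with the backbone allowed to be empty when $k = n-2$. For every pair $(a,b)$ with $1 \le a < b \le k+2$, define $P_{a,b}$ by declaring $c_i \prec c_j$ for all $i<j$, $c_i \prec z_l$ for all $(i,l)$, and $z_a \prec z_b$; all remaining pairs are incomparable. The relation is already transitively closed, and this yields a family of $\binom{k+2}{2} = \Theta(k^2)$ partial orders indexed by $(a,b)$.

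Before the lower bound, I would verify uniqueness of the maximum chain. Since $(z_a,z_b)$ is the only comparable pair among the $z_l$'s, any chain contains at most two of the $z_l$'s, and if it contains two, they must be $z_a$ and $z_b$. Hence the unique maximum chain of $P_{a,b}$ is $c_1 \prec \dots \prec c_{n-k-2} \prec z_a \prec z_b$, of length exactly $n-k$. Different choices of $(a,b)$ produce different maximum chains, so a correct algorithm must in effect distinguish every pair in the family.

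For the lower bound I would run any deterministic algorithm $\mathcal{A}$ against the adversary that answers the forced ``$c_i \prec \cdot$'' direction on every query involving a $c_i$, and answers ``incomparable'' to every query on a $z$-$z$ pair. These replies are simultaneously consistent with $P_{a,b}$ for every pair $(a,b)$ whose query has not been asked. If $\mathcal{A}$ performs fewer than $\binom{k+2}{2}-1$ queries on $z$-$z$ pairs, then at least two choices of $(a,b)$ remain consistent with the transcript; their unique maximum chains differ, yet the deterministic $\mathcal{A}$ returns a single chain, so it fails on at least one of those inputs. Therefore on some $P_{a,b}$ the algorithm must make $\binom{k+2}{2}-1 = \Omega(k^2)$ partial oracle queries, which also lower-bounds the running time. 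The only subtlety I anticipate is a boundary check at $k=n-2$ (where the backbone vanishes and we literally recover the intro example) and a quick confirmation that $\binom{k+2}{2}-1 \in \Omega(k^2)$ for small $k$; no entropy or structural machinery is required beyond these elementary observations.
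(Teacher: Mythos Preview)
Your proposal is correct and takes essentially the same approach as the paper: both build the family by attaching a fixed chain of length $n-k-2$ to a set of $k+2$ elements among which exactly one hidden comparable pair lives, so that identifying the unique maximum chain forces the algorithm to locate that pair among $\Theta(k^2)$ candidates. The only cosmetic differences are orientation (the paper places the $k+2$ elements below the backbone chain rather than above it) and that you spell out the adversary argument explicitly where the paper simply asserts that $\Omega(k^2)$ queries are needed.
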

\begin{proof}
    We define $P$ as follows: Let $A$ consists of $k+2$ elements and $B$ consist of $n-k-2$ elements.
    Let $x, y \in A$ be distinct. Put $x \prec_P y$ and make all other pairs of elements in $A$ incomparable.
    Let the elements in $B$ form a chain. Put $a \prec_P b$ for every $a \in A, b \in B$.

    $P$ has a unique maximum chain $\set{x, y} \cup B$, of length $2 + |B| = n-k$.
    Any algorithm that finds this chain needs to find $(x, y)$ among all $\Theta(k^2)$
    pairs of elements in $A$. This requires $\Omega(k^2)$ partial oracle queries.
\end{proof}

If the maximum chain should be output in sorted order, then an additional $\Omega(n \log n)$ lower
bound for sorting applies, hence the total lower bound is $\Omega(n \log n + k^2)$.
Using our techniques, we can match this bound algorithmically.

\begin{lemma}
    Let $P$ be a partial order with a maximum chain of length $n - k$. Then, a maximum chain in $P$ can be computed in $O(n \log n + k^2)$ time and partial oracle queries.
\end{lemma}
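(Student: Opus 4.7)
The plan is to use \cref{theo:majk} to reduce the problem to a small residual set, then find the exact maximum chain via a sparse DAG longest-path computation. First I would invoke \cref{theo:majk} to obtain, in $O(n \log n)$ time and queries, a chain $C'$ of length at least $n - 2k$. Let $R = X \setminus C'$; then $|R| \le 2k$, and all remaining work will focus on $R$.

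Next I would localize each element of $R$ relative to $C'$. Because $C'$ is itself a chain, transitivity forces the set $\{c \in C' : c \prec_P r\}$ to be a prefix of $C'$ and $\{c \in C' : r \prec_P c\}$ to be a suffix, for every $r \in R$. Hence two binary searches per $r$ (each using $O(\log n)$ partial oracle queries) suffice to locate the maximum $\alpha_r \in C'$ with $\alpha_r \prec_P r$ and the minimum $\beta_r \in C'$ with $r \prec_P \beta_r$, for a total of $O(k \log n)$ queries. I would also compare all pairs within $R$ explicitly using $O(k^2)$ additional queries, which fully determines the induced partial order on $R$.

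With this data in hand, I would build a sparse DAG on $X$ whose edges consist of (i) the consecutive-element edges along $C'$, (ii) $\alpha_r \to r$ and $r \to \beta_r$ for each $r \in R$ (when these exist), and (iii) all comparability edges within $R$. This DAG has $O(n + k^2)$ edges, and since consecutive vertices on any directed path are in $\prec_P$ relation, every directed path is itself a chain of $P$ by transitivity. A single topological-order longest-path DP then computes the longest such path, and hence a maximum chain, in $O(n + k^2)$ time.

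The main subtlety is correctness, namely verifying that no chain of $P$ can exceed the longest path in the sparse DAG. The key observation is that a transitive-closure edge $c \to r$ with $c \in C'$ and $r \in R$ can always be simulated by walking forward through $C'$ to $\alpha_r$ and then taking $\alpha_r \to r$; this only inserts additional vertices, never removes them, so any chain of $P$ lifts to a path of at least the same length in the sparse DAG, while the converse direction is immediate. Summing costs, the total complexity is $O(n \log n) + O(k \log n + k^2) + O(n + k^2) = O(n \log n + k^2)$, matching the claimed bound.
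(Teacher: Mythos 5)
Your proposal is correct and follows essentially the same route as the paper's proof sketch: find a chain of length $n-2k$ via \cref{theo:majk}, attach the $\le 2k$ leftover elements to their predecessor/successor on that chain, compare all leftover pairs, and run a longest-path computation on the resulting sparse DAG with $O(n+k^2)$ edges. Your added verification that every chain of $P$ lifts to an at-least-as-long (and simple) path in the sparsified DAG is a welcome elaboration of a step the paper leaves implicit.
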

\begin{proof}[Proof sketch]
    With \cref{theo:majk}, we can find a chain $C$ of length $n - 2k$ in $O(n \log n)$ time.
    For the remaining $2k$ elements, we find their predecessor and successor in $C$ in $O(k \log n)$ time.
    To find a maximum chain, it suffices to consider (1) these edges from a predecessor / to a successor in $C$, (2) the edges in the path formed by $C$, and (3) the edges between any pair of elements not in $C$. 
    This sparsifies $P$ into a DAG with $n$ vertices and $O(n + k^2)$ edges.
    There, a longest path can be found in $O(n + k^2)$ time.
\end{proof}

\bibliography{references,zotero}

\end{document}